\def\ps@headings{%
\def\@oddhead{\mbox{}\scriptsize\rightmark \hfil \thepage}%
\def\@evenhead{\scriptsize\thepage \hfil \leftmark\mbox{}}%
\def\@oddfoot{}%
\def\@evenfoot{}}
\newtheorem{theorem}{Theorem}
\newtheorem{lemma}{Lemma}
\newtheorem{definition}{Definition}
\newcommand{\rev}[1]{{\color{blue}#1}} 
\newcommand{\com}[1]{\textbf{\color{red}(COMMENT: #1)}} 
\newcommand{\clar}[1]{\textbf{\color{green}(NEED CLARIFICATION: #1)}}
\newcommand{\rev}[1]{#1}
\newcommand{\com}[1]{}
\newcommand{\clar}[1]{}
\begin{document}

\title{Profit Incentive In A Secondary Spectrum Market: A Contract Design Approach}
\author{ Shang-Pin Sheng,  Mingyan Liu \\
Electrical Engineering and Computer Science, University of Michigan\\ 
\{shangpin, mingyan\}@umich.edu}
\maketitle

\begin{abstract}
In this paper we formulate a contract design problem where a primary license holder wishes to profit from its excess spectrum capacity by selling it to potential secondary users/buyers.  It needs to determine how to optimally price the excess spectrum so as to maximize its profit, knowing that this excess capacity is stochastic in nature, does not come with exclusive access, and cannot provide deterministic service guarantees to a buyer.   At the same time, buyers are of different {\em types}, characterized by different communication needs, tolerance for the channel uncertainty, and so on, all of which a buyer's private information.  The license holder must then try to design different contracts catered to different types of buyers in order to maximize its profit.  We address this problem by adopting as a reference a traditional spectrum market where the buyer can purchase exclusive access with fixed/deterministic guarantees.  
We fully characterize the optimal solution in the cases where there is a single buyer type, and when multiple types of buyers share the same, known channel condition as a result of the primary user activity.  In the 
most general case we construct a algorithm that generates a set of contracts in a computationally efficient manner, and show that this set is optimal when the buyer types satisfy a monotonicity condition.
\end{abstract}

\section{Introduction}


The scarcity of spectrum resources and the desire to improve spectrum efficiency have  led to extensive research and development in recent years in such concepts as dynamic spectrum access/sharing, open access, and secondary (spot or short-term) spectrum market, see e.g., \cite{akyildiz2006next,buddhikot2007understanding}. 
From the inception of the open access paradigm, it was clear that for it to work two issues must be adequately addressed: sensing and pricing. The first refers to the ability of a (secondary) device to accurately detect channel opportunity and more generally to acquire information on the spectrum environment. 
The second refers to mechanisms that provide license holders with the right incentives so that they will willingly allow access by unlicensed devices.

There has been a number of mechanisms proposed to address this incentive issue, the most often used being the auction mechanism, see e.g., \cite{wu2008multi,gandhi2007general,huang2006auction}.  Auction is also the primary mechanism used in allocating spectrum on the primary market \cite{grimm2003low}.  In this paper we consider an alternative approach, that based on {\em contracts}, to the trading of spectrum access on the secondary market (see Section \ref{sec:discussion} on a discussion comparing the two mechanisms). 
%
This is conceptually not unlike the design of pricing plans by a cellular operator: it presents a potential user with a set of contract options, each consisting of parameters including the duration of the contract, discount on the device, number of free minutes per month, price per minute for those over the free limit, window of unlimited calling time, and so on.  In coming up with these calling plans the operator typically studies carefully the types of callers it wants to attract and their calling patterns/habits; the subsequent plans are catered to these patterns with the objective of maximizing its revenue.  A caller interested in entering into contract with the operator is expected to look through these plans and pick one that is the best suited for him/her needs.  

In this paper we adopt such a contract design approach in the context of the secondary spectrum market, where a license holder advertises a set of prices and service plans in the hope that a potential buyer will find one of them sufficiently appealing to enter into contract.  
The contracts are designed with the goal of maximizing the expected revenue of the license holder given a set of buyer {\em types} (more precisely defined in the next section). 

To make the contracts appealing to a buyer, one must address the issue that the spectrum  offered on the secondary (short-term) market is 
%
typically the excess capacity due to the primary license holder's own spectrum under-utilization.  Its quality is therefore often uncontrolled and random, both spatially and temporally, and strongly dependent on the behavior of the primary users.  
The primary license holder can of course choose to eliminate the randomness by setting aside resources (e.g., bandwidth) exclusively for secondary users.  This will however likely impinge on its current users and may not be in the interest of its primary business model.  
%
The alternative is to simply give non-exclusive access to secondary users for a fee, which allows the secondary users to {\em share} a certain amount of bandwidth simultaneously with its existing licensed users, but only under certain conditions on the primary traffic/spectrum usage.  For instance, a secondary user is given access but can only use the bandwidth if the current activity by the licensed users is below a certain level, e.g., as measured by received SNR, the so-called spectrum overlay.  
Many spectrum sharing schemes proposed in the literature fall under this scenario, see e.g., \cite{kim08,xin-liu06,zhao07,ahmad-tit08}. 

In this case a secondary user pays (either in the form of money or services in return) to gain spectrum access but not for guaranteed use of the spectrum.  This presents a challenge to both the primary and the secondary users: On one hand, the secondary user must assess its needs and determine whether the uncertainty in spectrum quality is worth the price asked for and what level of uncertainty can be tolerated.  On the other hand, the primary must decide how stochastic service quality should be priced so as to remain competitive against guaranteed (or deterministic) services which the secondary user may be able to purchase from a traditional market or a different primary license holder.  
To address this challenge we adopt a reference point in the form of a traditional spectrum market from where a secondary user can purchase guaranteed service, i.e., exclusive access rights to certain bandwidth, at a fixed price per unit.  
This makes it possible for the secondary user to reject the offer from the primary if it is risk-averse or if the primary's offer is not attractive. This also implies that the price per unit of bandwidth offered by the primary user must reflect its stochastic quality. 

Work most relevant to the study presented in this paper includes \cite{duan2011contract,muthuswamyportfolio}. In \cite{duan2011contract} a contract problem is studied where the secondary users help relay primary user's data and in return are allowed to send their own data.  In  \cite{muthuswamyportfolio} an optimal portfolio problem is studied, where a secondary user can purchase a bundle of different stochastic channels, with the price of each already determined, and seeks to find the optimal purchase.  


Main contributions of this paper are as follows:
\begin{enumerate}
\item We formulate a contract design problem where the spectrum license holder seeks to sell his excess bandwidth to potential buyers. The model captures the following essential features: (1) excess bandwidth on the secondary spectrum market often comes with non-exclusive use and therefore highly uncertain channel conditions; (2) incentives are built in for both the seller and the buyer to conduct spectrum trading on the secondary market. 
\item We fully characterize the optimal set of contracts the seller should provide in the case of a single or two types of buyers, and when multiple types of buyers share the same channel condition due to primary user activities. 
\item When there are multiple types of buyers and each experiences different channel conditions, we construct a computationally efficient algorithm and show that the set of contracts it generates is optimal when the buyer types satisfy a monotonicity condition.
\end{enumerate}

The remainder of the paper is organized as follows. We present the contract design problem in Section \ref{sec:model}. Section \ref{sec:single-type} characterizes the utility region and the optimal contract in the single buyer case. Section \ref{sec:public} deals with the case when the channel condition is common knowledge, while Section \ref{sec:private} focuses on the case when channel conditions are private knowledge.  Discussion is given in Section \ref{sec:discussion} and numerical results in Section \ref{sec:simulation}.
\section{Model and Assumptions}\label{sec:model} 

In this section we describe in detail the models for the two parties under the contract framework: the seller and the buyer, and their considerations in designing and accepting a contract, respectively.   We also illustrate a basic idea underlying our model to capture the value of secondary spectrum service, which is random and non-guaranteed in nature, by using guaranteed service as a reference.



There are two parties to a contract, the seller and the buyer.  The {\em seller} is also referred to as the owner or the primary license holder, who uses the spectrum to provide business and service to its {\em primary users}, and carry {\em primary traffic}. 
He is willing to sell whatever underutilized bandwidth he has as long as it generates positive profit and does not impact negatively his primary business. 
We will assume that the seller can pre-design up to $M$ contracts and announce them to potential buyers. If a buyer accepts one of the contracts, they come to an agreement and they have to follow the contract up to a predetermined period of time. 
We will leave this duration unspecified as it does not affect our analysis under the current model. 

Each contract is in the form of a pair of real numbers $(x, p)$, where $x\in R^+$ and $p\in R^+$.
\begin{itemize}
\item $x$ is the amount of bandwidth they agree to trade on (i.e., access to this amount of bandwidth is given from the seller to buyer).
\item $p$ is the price per unit of $x$; thus a total of $xp$ is paid to the seller if the buyer purchases this contract.  \end{itemize}

The seller's profit or utility from contract $(x, p)$ is given as 
\begin{equation*}
U(x,p)=x(p-c)
\end{equation*}
where $c$ is a predetermined constant that takes into account the operating cost of the seller. 
We will assume that any contract the seller presents must be such that $p>c$; that is, the seller will not sell at a loss. If none of the contracts is accepted by the buyer, the {\em reserve utility} of the owner is defined by $U(0,0)=0$.  




{We next consider what a contract specified by the pair $(x, p)$ means to a potential buyer.  To see this, we will assume that there exists a traditional (as opposed to this emerging, secondary) market from where the buyer can purchase services with fixed or deterministic guarantees.  What this means is that the buyer can purchase {\em exclusive} use of certain amount of bandwidth, which does not have to be shared with other (primary) users.  This serves as an alternative to the buyer, and is used in our model as a point of reference.  We will not specify how the price of exclusive use is set, and will simply normalize it to be unit price per unit of bandwidth (or per unit of transmission rate). The idea is that given this alternative, the seller  cannot arbitrarily set his price because the buyer can always walk away and purchase from this traditional market.  This traditional market will also be referred to as the {\em reference} market, and the service it offers as the {\em fixed} or {\em deterministic} service.  Our model allows a buyer to purchase from both markets should that be in the interest of the buyer.  Note that even though we have assumed a single seller model, this is not a monopoly because of the existence of this reference market.  However, we do not explicitly model the competition between multiple sellers on the secondary market, which remains an interesting subject of future study.
} 

%


When the set of $M$ contracts are presented to a buyer, his choices are (1) to choose one of the contracts and abide by its terms, (2) to reject all contracts and go to the traditional market, and (3) to purchase a certain combination from both markets.  The buyer's goal is to minimize his purchasing cost as long as certain quality constraints are satisfied.  

While the framework presented here applies to any meaningful quality constraint, to make our discussion concrete below we will focus on a loss constraint.  
Suppose the buyer chooses to purchase $y$ units of fixed service from the reference market together with a contract $(x, p)$.  Then its constraint on expected loss of transmission can be expressed as: 
\begin{equation*}
E[(q-y-x B)^+]\leq \epsilon ~, 
\mbox{where} 
\end{equation*}
\begin{itemize}
\item $q$ is the amount of data/traffic the buyer wishes to transmit.
\item $B\in \{0, 1\}$ is a binary random variable denoting the quality of the channel for this buyer.  We will denote $b:=P(B=1)$. 
\item $\epsilon$ is a threshold on the expected loss acceptable to the buyer. 
\end{itemize}
%
Note that quantities $x, y$ and $q$ are of the same unit; this unit can be bit (total amount of transmission), or rate (bits per second), and so on. 
%
%
Here we have adopted a simplifying assumption that the purchased service (in the amount of $x$) is either available in the full amount (when $B=1$) or unavailable (when $B=0$), with $xb$ being the expected availability.  
If the contract duration is comparable to the time constant of the primary user activity (e.g., peak vs. off-peak hours) then this model captures the spectrum condition at the time of contract signing. 
More sophisticated models can be adopted here, by replacing $x B$ with another random variable $X(x)$ denoting the random amount of data transmission the buyer can actually realize.  

With a purchase of $(y, (x, p))$, the buyer's cost is given by 
$y+xp$. 
The cost of the contract $(x, p)$ to this buyer is given by the value of the following minimization problem: 
\begin{eqnarray}
C(x,p)= \underset{y}{\text{minimize}} & & y+xp \\
  \text{subject to} & & E[(q-y-xB)^+]\leq \epsilon \label{constraint}
\end{eqnarray}
That is, to assess how much this contract actually costs him, the buyer has to consider how much additional fixed service he needs to purchase to fulfill his needs. 

The buyer can always choose to not enter into any of the presented contracts and only purchase from the traditional market.  In this case, his cost is given by the value of the following minimization problem: 
\begin{eqnarray*}
C(0,0)= \underset{y}{\text{minimize}} ~ y,   
~~  \text{subject to} ~ E[(q-y)^+]\leq \epsilon
\end{eqnarray*}
Since every term is deterministic in the above problem, we immediately conclude that $C(0,0)=q-\epsilon$, which will be referred to as the \emph{reserve price} of the buyer. 
It is natural to assume that any buyer must be such that $q \geq \epsilon$, for otherwise the buyer does not need to perform any transmission as it can tolerate the loss of all of its data.
 

In deciding whether to accept a given contract $(x, p)$, the buyer has to consider (1) whether the contract would satisfy its quality (loss) constraint, and (2) whether there is an incentive to enter into this contract, i.e., whether the cost of this contract is no higher than the reserve price.  The latter is also referred to as the {\em individual rationality} (IR) constraint, 
%
$C(x,p)\leq C(0,0)=q-\epsilon$.  Any contract that satisfies both constraints of a buyer is referred to as {\em acceptable} to that buyer.


We will assume that a potential buyer may be one of a number of different {\em types}; each type is characterized by a unique triple $(q, \epsilon, b)$, which is a buyer's {\em private information}.  That is, a type is characterized by its transmission needs (amount $q$ to be transferred and loss requirement $\epsilon$), as well as its perceived spectrum/channel quality ($b$).  
Throughout the paper we will assume that a type $(q, \epsilon, b)$ is such that there exists a contract with $p>c$ acceptable to the buyer, for otherwise the seller has no incentive to sell.  

We will further assume two cases, where $b$ is common to all types and where $b$ may be different for different types.  The first case models the scenario where buyers are relatively homogeneous and their perceived channel quality is largely determined by the primary user traffic reflected in $b$.  In this case it is also natural to assume that $b$ is known to the seller.  The second case models the scenario where buyers may differ significantly in their location, quality of transceiver devices, and so on, which leads to different perceived channel quality, which is only known to a buyer himself. 

The seller is assumed to know the distribution of the types but not the actual type of a particular buyer.  Specifically, we will assume there are $K$ types of buyers, and a buyer is of type $i$ with probability $r_i$ and is given by the triple $(q_i, b_i, \epsilon_i)$.  
Continuous type distribution is discussed in Section \ref{sec:discussion}. 
 %
In subsequent sections we proceed in the following sequence:  
(1) single user type,   
%
(2) multiple user types; common $b$, 
and (3) 
multiple user types; different and private $b$.  \rev{Due to space limit, we omit some proofs and offer intuitive explanations instead. Complete proofs can be found in \cite{fullversion}.}
\section{Optimal contract for a single buyer type} \label{sec:single-type} 

We begin by considering the case where there is only one type of buyer $(q, \epsilon, b)$.  Through this simplified scenario we will introduce a number \rev{of} concepts key to our analysis and obtain some basic understanding of the nature of this problem. 

Under our assumption that the seller knows the buyer type distribution, having a single type (i.e., a singleton distribution) essentially means that the triple $(q, \epsilon, b)$ is known to the seller.   
%
%
Denote by $T = \{(x,p): C(x,p)\leq C(0,0)\}$ the set of all acceptable contracts for the buyer, or the {\em acceptance region}.   
This is characterized by the next result. 
\begin{theorem}\label{thm:acceptance_region}
When $q(1-b)\leq\epsilon$, the buyer accepts a contract $(x, p)$ iff 
\begin{eqnarray}
p&\leq&\left \{ \begin{tabular}{c c}
$b$ & if $x\leq\frac{q-\epsilon}{b}$\label{eqn:Thm1-1}\\
$\frac{q-\epsilon}{x}$ & if $x>\frac{q-\epsilon}{b}$
\end{tabular}\right. ~. \label{eqn:acceptance_region_1}
\end{eqnarray}
When $q(1-b)>\epsilon$, the buyer accepts the contract iff  
\begin{eqnarray}
p&\leq&\left \{ \begin{tabular}{c c}
$b$ & if $x\leq\frac{\epsilon}{1-b}$\label{eqn:Thm1-2}\\
$\frac{b\epsilon}{x(1-b)}$ & if $x>\frac{\epsilon}{1-b}$
\end{tabular}\right. ~. \label{eqn:acceptance_region_2} 
\end{eqnarray}
\end{theorem}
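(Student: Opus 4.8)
The plan is to first collapse the bilevel structure into a one-dimensional problem. Since the term $xp$ in the objective of (\ref{constraint}) does not involve $y$, we have $C(x,p)=xp+y^{*}(x)$, where $y^{*}(x):=\min\{y\ge 0: g(y)\le\epsilon\}$ and $g(y):=E[(q-y-xB)^{+}]=b(q-x-y)^{+}+(1-b)(q-y)^{+}$ is the expected residual loss as a function of the amount $y$ of deterministic service purchased. The function $g$ is continuous, convex, non-increasing and piecewise linear in $y$, with kinks only at $y=q-x$ (present when $x<q$) and $y=q$; hence $\{y\ge0:g(y)\le\epsilon\}$ is a half-line, and $y^{*}(x)$ equals $0$ when $g(0)\le\epsilon$ and otherwise is the unique root of $g(y)=\epsilon$.

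The core step is to evaluate $y^{*}(x)$ explicitly. On the branch $y\ge\max\{0,q-x\}$ one has $g(y)=(1-b)(q-y)^{+}$, which hits $\epsilon$ at $y=q-\tfrac{\epsilon}{1-b}$; on the branch $0\le y\le q-x$ (nonempty only if $x<q$) one has $g(y)=q-bx-y$, which hits $\epsilon$ at $y=q-bx-\epsilon$. The two hypotheses of the theorem enter precisely here: a short computation gives $q(1-b)\le\epsilon \iff \tfrac{q-\epsilon}{b}\le\tfrac{\epsilon}{1-b} \iff \tfrac{q-\epsilon}{b}\le q$ (with the reversed inequalities otherwise), which fixes the order of the three relevant thresholds $\tfrac{q-\epsilon}{b}$, $\tfrac{\epsilon}{1-b}$ and $q$. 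Combining this with $g(0)=q-bx$ for $x\le q$ (so $g(0)\le\epsilon\iff x\ge\tfrac{q-\epsilon}{b}$), $g(0)=(1-b)q$ for $x\ge q$, and the elementary equivalences $q-bx-\epsilon\ge0\iff x\le\tfrac{q-\epsilon}{b}$ and $q-bx-\epsilon\le q-x\iff x\le\tfrac{\epsilon}{1-b}$, one checks in each regime which of the two candidate roots is the actual $y^{*}(x)$. The outcome is: when $q(1-b)\le\epsilon$, $y^{*}(x)=q-bx-\epsilon$ for $x\le\tfrac{q-\epsilon}{b}$ and $y^{*}(x)=0$ for $x>\tfrac{q-\epsilon}{b}$; when $q(1-b)>\epsilon$, $y^{*}(x)=q-bx-\epsilon$ for $x\le\tfrac{\epsilon}{1-b}$ and $y^{*}(x)=q-\tfrac{\epsilon}{1-b}$ for $x>\tfrac{\epsilon}{1-b}$.

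It then remains to substitute these expressions into $C(x,p)=xp+y^{*}(x)$ and impose the individual-rationality inequality $C(x,p)\le C(0,0)=q-\epsilon$ on each piece: $x(p-b)+q-\epsilon\le q-\epsilon$ yields $p\le b$ (for $x>0$); $xp\le q-\epsilon$ yields $p\le\tfrac{q-\epsilon}{x}$; and $xp+q-\tfrac{\epsilon}{1-b}\le q-\epsilon$ yields $xp\le\tfrac{b\epsilon}{1-b}$, i.e.\ $p\le\tfrac{b\epsilon}{x(1-b)}$. This is exactly (\ref{eqn:acceptance_region_1})--(\ref{eqn:acceptance_region_2}), and checking that $C(\cdot,p)$ is continuous across the threshold in $x$ shows the two pieces glue together without a gap. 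I expect the only real difficulty to be bookkeeping: in each sub-case one must confirm that the candidate root of $g(y)=\epsilon$ genuinely lies in the linear piece on which it was computed and that $y^{*}(x)\ge0$ --- this is exactly where the ordering of the three thresholds (hence the sign of $q(1-b)-\epsilon$) is indispensable --- and the boundary sub-case $x\ge q$, where the secondary bandwidth alone exceeds the demand, must be verified separately but merges cleanly into the large-$x$ regime.
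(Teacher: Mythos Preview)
Your argument is correct and complete. The route you take differs somewhat from the paper's. The paper (in the portion shown) argues one sub-case of sufficiency by \emph{verification}: it exhibits a particular feasible supplementary purchase $y=q-\epsilon-xp$ and checks directly that with this $y$ both the loss constraint and the IR constraint hold, thereby establishing $C(x,p)\le C(0,0)$ without ever computing $C(x,p)$ itself. You instead \emph{solve the inner minimization}: you compute $y^{*}(x)$ exactly by analyzing the piecewise-linear loss function $g(y)$, obtain $C(x,p)=xp+y^{*}(x)$ in closed form on each piece, and then read off the IR condition. Your approach has two advantages: it is uniform across all four sub-cases (no separate guess is needed in each), and it yields both directions of the ``iff'' simultaneously, whereas the paper's verification gives sufficiency and leaves necessity to analogous but separate arguments. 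The paper's approach, on the other hand, is slightly lighter in that one never needs to locate the exact root of $g(y)=\epsilon$ or track the ordering of the three thresholds; it suffices to produce \emph{some} feasible $y$ achieving cost $q-\epsilon$. Either way the computations are elementary, and your bookkeeping of the threshold ordering $\tfrac{q-\epsilon}{b}\lessgtr\tfrac{\epsilon}{1-b}\lessgtr q$ governed by the sign of $q(1-b)-\epsilon$ is exactly the right organizing principle.
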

The above theorem can be proved for each of the cases listed above.  For brevity below we only show the proof for the sufficient condition under $q(1-b)\leq\epsilon$ for the first case in  Eqn (\ref{eqn:acceptance_region_1}); other cases can be done using similar arguments.
\begin{lemma}\label{lem:acceptance_region_1}
When $q(1-b)\leq\epsilon$, the buyer accepts the contract $(x,p)$ if $x\leq \frac{q-\epsilon}{b}$ and $p\leq b$. 
\end{lemma}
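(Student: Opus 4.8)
The plan is to bound the cost $C(x,p)$ from above by exhibiting a single admissible value of $y$ in the minimization that defines $C(x,p)$, and to show that this candidate already achieves a cost of at most $q-\epsilon=C(0,0)$. Since $B$ takes only the values $0$ and $1$ with $P(B=1)=b$, the loss constraint (\ref{constraint}) reads $b(q-y-x)^+ + (1-b)(q-y)^+ \le \epsilon$, which is piecewise linear in $y$; the natural candidate is the value for which the ``expected delivered traffic'' $y+xb$ exactly meets the target $q-\epsilon$, namely $y^\star = q-\epsilon-xb$.

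First I would check that $y^\star$ is admissible, i.e. $y^\star\ge 0$; this is immediate from the hypothesis $x\le (q-\epsilon)/b$. Next I would verify that $y^\star$ satisfies the loss constraint. The key sub-step is to pin down the sign of $q-y^\star-x = \epsilon - x(1-b)$: I claim it is nonnegative. Indeed $x(1-b)\le (q-\epsilon)(1-b)/b$ by the hypothesis on $x$ and $1-b\ge 0$, and $(q-\epsilon)(1-b)/b\le\epsilon$ is algebraically equivalent to the regime assumption $q(1-b)\le\epsilon$. Once both $q-y^\star$ and $q-y^\star-x$ are known to be nonnegative, the positive parts can be dropped and the left-hand side of (\ref{constraint}) telescopes to exactly $b\bigl(\epsilon-x(1-b)\bigr)+(1-b)(\epsilon+xb)=\epsilon$, so the constraint holds (with equality).

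Finally, with $y^\star$ feasible, $C(x,p)\le y^\star + xp = (q-\epsilon) - x(b-p) \le q-\epsilon$, where the last inequality uses $p\le b$ and $x\ge 0$. Hence $C(x,p)\le C(0,0)$, i.e. the contract $(x,p)$ is acceptable to the buyer, which is the claim.

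I expect the only genuine obstacle to be the sign check $q-y^\star-x\ge 0$: this is precisely the step that consumes the regime assumption $q(1-b)\le\epsilon$. If that assumption failed, $y^\star$ would still be admissible but might not be the right choice; one would instead fall back on $y=q-\epsilon$ (buying all needed service deterministically and ignoring the random channel), which is exactly the alternative branch treated in Theorem \ref{thm:acceptance_region}. Everything else in the argument is routine substitution together with the nonnegativity of $x$, $b-p$, and $\epsilon-x(1-b)$.
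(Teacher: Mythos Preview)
Your proof is correct, and it differs from the paper's in a small but pleasant way. Both arguments exhibit a single feasible $y$ in the minimization defining $C(x,p)$, but you and the paper pick different candidates. The paper takes $y=q-\epsilon-xp$, which makes the IR bound tight (the resulting cost is exactly $q-\epsilon$) and then verifies the loss constraint via a case split on the sign of $\epsilon+x(p-1)$. You instead take $y^\star=q-\epsilon-xb$, which makes the loss constraint tight (it evaluates to exactly $\epsilon$) and then bounds the cost by $q-\epsilon-x(b-p)\le q-\epsilon$. Your $y^\star$ is in fact the true minimizer in this regime, so your argument computes $C(x,p)$ exactly and avoids the case analysis altogether; the regime hypothesis $q(1-b)\le\epsilon$ is consumed precisely in your sign check $q-y^\star-x\ge 0$, whereas in the paper it is used inside the first branch of the split.

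One minor inaccuracy in your closing commentary: in the other regime $q(1-b)>\epsilon$ with $x\le \epsilon/(1-b)$, the same $y^\star=q-\epsilon-xb$ still satisfies both $y^\star\ge 0$ and $q-y^\star-x=\epsilon-x(1-b)\ge 0$, so your argument would go through verbatim there as well; the fallback $y=q-\epsilon$ you mention is neither needed nor helpful (it would give cost $q-\epsilon+xp>q-\epsilon$). This does not affect the proof of the present lemma.
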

\begin{proof}
If both the IR constraint and the loss constraint are satisfied under the stated conditions, then the buyer accepts the contract.  Below we check these two constraints. 
Let the buyer supplement this contract with an additional purchase of $y=q-\epsilon-xp$ deterministic service.  Note that $y\geq 0$ under the stated conditions. The total cost of this contract to the buyer is then given by:
\begin{equation*}
C(x, p) = y+xp=q-\epsilon-xp+xp=q-\epsilon = C(0,0) .
\end{equation*}
The IR constraint is therefore satisfied.  The buyer's loss under this combination of purchases is given by: 
\begin{eqnarray*}
&E&[(q-y-xB)^+]\\
&=&(q-y-x)^+b+(q-y)^+(1-b)\\
&=&(\epsilon+xp-x)^+b+(\epsilon+xp)(1-b)\\
&=&\left \{
\begin{array}{l}
(\epsilon+xp)(1-b)\leq(\epsilon+b\frac{q-\epsilon}{b})(1-b)  \\
~~~ =q(1-b)\leq \epsilon, ~~~~~~~~~  \mbox{if } \epsilon+x(p-1)\leq 0\\
(\epsilon+x(p-1))b+(\epsilon+xp)(1-b)  \\
~~~ =\epsilon+x(p-b)\leq\epsilon, ~~~~ \mbox{if } \epsilon+x(p-1)> 0
\end{array}\right.
\end{eqnarray*}
Thus the loss constraint is also satisfied. 
\end{proof}
The two acceptance regions given by Theorem \ref{thm:acceptance_region} are illustrated in Figs. \ref{fig:boundary}.  
Any contract that falls below the boundary is acceptable to the buyer.  The two cases have the following interpretation.  
In the first case when $q(1-b)\leq \epsilon$, 
the quality of the stochastic channel is sufficiently good such that the loss constraint (\ref{constraint}) may be met 
without any purchase of the deterministic channel. 
In this case the buyer is willing to spend up to the entire reserve price $C(0,0)=q-\epsilon$ on the contract. 
In the second case when $q(1-b)> \epsilon$, 
the quality of the stochastic channel is such that no matter how much is purchased, some deterministic channel is needed ($y>0$) to satisfy the loss constraint 
(note $xp \leq \frac{b\epsilon}{1-b} < q-\epsilon$ because $q(1-b)> \epsilon$).
\begin{figure}[h]
\centering
\includegraphics[width=0.4\textwidth]{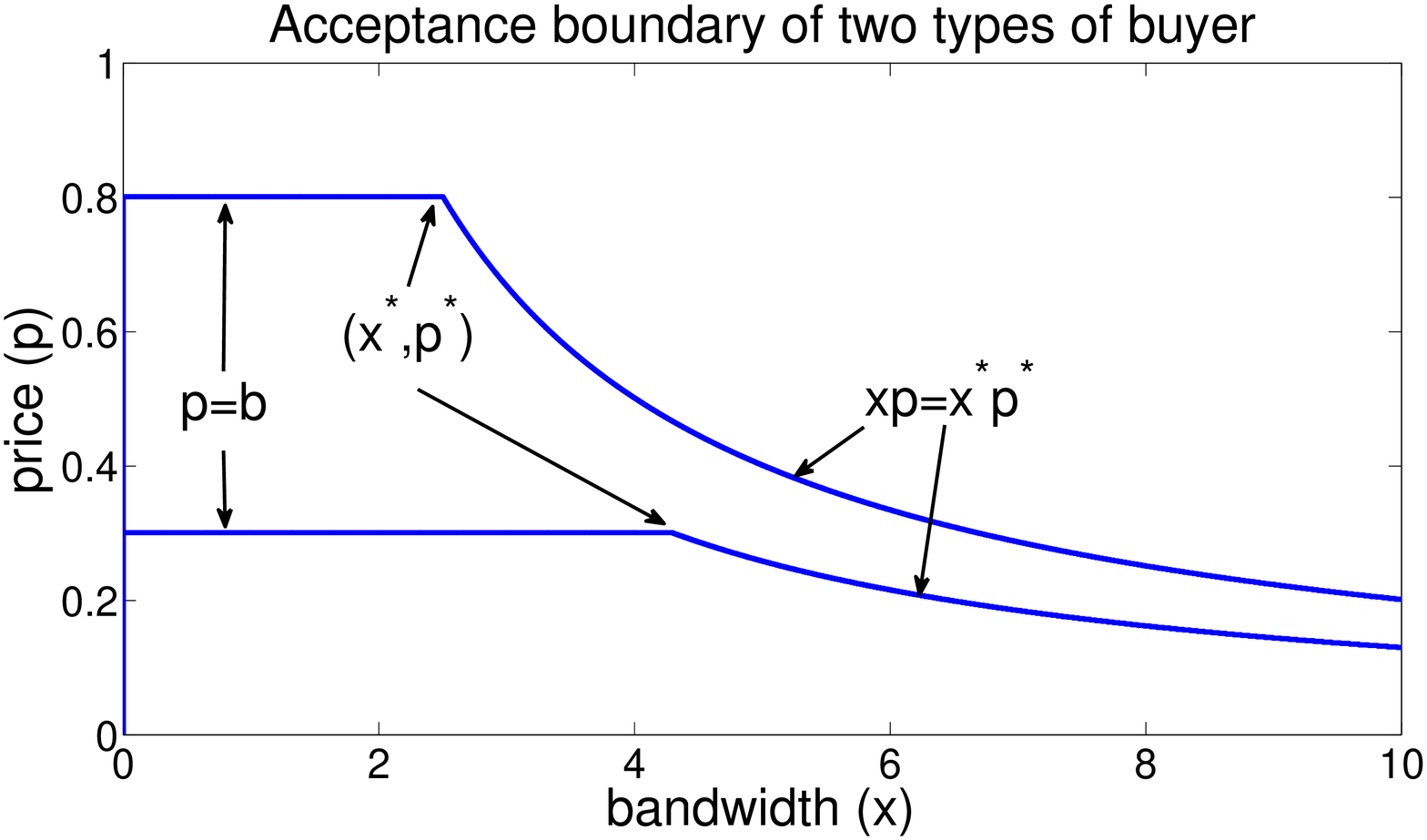}
\caption{The upper curve is when $q(1-b)< \epsilon(q=5, b=0.8,\epsilon = 3)$, the lower curve is when $q(1-b)< \epsilon(q=5,b=0.3,\epsilon=3)$}
\label{fig:boundary}
\end{figure}
%
For a given buyer type ($q,\epsilon,b$), the seller can choose any point in the corresponding acceptance region $T$ to maximize its utility: 
$\max_{(x,p) \in T}  U(x,p)$. 
We next show that the optimal contract for the seller is given by the ``knee'' (the intersection point where the straight line meets the curve) on the boundary of the acceptance region, 
denoted as $(x^*,p^*)$. 
\begin{theorem}
The optimal contract for the seller is the intersection point $(x^*, p^*)$ on the acceptance region boundary of the buyer.
\end{theorem}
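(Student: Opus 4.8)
The plan is to turn the two–dimensional maximization $\max_{(x,p)\in T}U(x,p)=\max_{(x,p)\in T}x(p-c)$ into a one–dimensional problem along the boundary of $T$, using the closed form for that boundary supplied by Theorem~\ref{thm:acceptance_region}. The first step is the observation that for every fixed $x>0$ the map $p\mapsto x(p-c)$ is strictly increasing, so an optimal contract can always be taken on the upper boundary of $T$: raising $p$ to the boundary only increases $U$ and keeps the contract acceptable. The degenerate choice $(x,p)=(0,0)$ gives $U=0$ and is dealt with at the end, once we know $b>c$; and $b>c$ must hold because, by the standing assumption, $T$ contains some acceptable contract with $p>c$, and along the flat part of the boundary the largest admissible price is $b$.

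The second step is to evaluate $U$ on the two pieces of the boundary described in Theorem~\ref{thm:acceptance_region}. Write $x^\ast$ for the abscissa of the knee, i.e.\ $x^\ast=(q-\epsilon)/b$ when $q(1-b)\le\epsilon$ and $x^\ast=\epsilon/(1-b)$ when $q(1-b)>\epsilon$, and note $p^\ast=b$ in both cases. On the flat piece $p=b$, $0\le x\le x^\ast$, we have $U=x(b-c)$, which is strictly increasing in $x$ since $b>c$; hence its maximum over this piece is at $x=x^\ast$, with value $x^\ast(b-c)$. On the hyperbolic piece $x>x^\ast$ the product $xp$ is constant --- equal to $q-\epsilon$ in the first regime and to $b\epsilon/(1-b)$ in the second --- so there $U=xp-cx$ is strictly decreasing in $x$; its supremum over $x\ge x^\ast$ is therefore attained at the left endpoint $x=x^\ast$, again with value $x^\ast p^\ast-cx^\ast=x^\ast(b-c)$ (the two constants coincide at the knee). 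Since the knee $(x^\ast,p^\ast)$ itself lies in $T$ and the two branch values agree there, the common value $U(x^\ast,p^\ast)=x^\ast(b-c)>0=U(0,0)$ dominates $U$ everywhere on the boundary, hence everywhere on $T$, which is the claim.

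I do not anticipate a real analytic obstacle: the argument is two monotonicity statements plus a consistency check at the knee. The only points that need a moment's care are (i) establishing $b>c$, so that the flat segment is genuinely increasing and the reserve point is not optimal, and (ii) making sure the supremum along the hyperbolic branch is actually \emph{attained} rather than merely approached --- which it is, precisely because the flat branch supplies the endpoint $x=x^\ast$, $p=p^\ast=b$. Because the two cases of Theorem~\ref{thm:acceptance_region} differ only in the numerical values of $x^\ast$ and of the constant $x^\ast p^\ast$, I would write the argument once with these named symbolically and then substitute the regime-specific values.
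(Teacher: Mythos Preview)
Your proposal is correct and follows essentially the same approach as the paper: first use monotonicity of $U$ in $p$ to restrict attention to the boundary of $T$, then compare $U$ along the flat segment (increasing in $x$ because $b=p^\ast>c$) and along the hyperbolic segment (decreasing in $x$ because $xp$ is constant there), concluding that both pieces peak at the knee $(x^\ast,p^\ast)$. Your justification of $b>c$ from the standing assumption is exactly the paper's, and your extra care about the reserve point and the attainment of the supremum at $x^\ast$ are harmless elaborations.
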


\begin{proof}
We prove the optimality in two steps. First we show that the seller's utility is strictly increasing in $p$ which implies that the optimal contract must be such that (\ref{eqn:Thm1-1}) and (\ref{eqn:Thm1-2}) hold with strict equality. Then we show that the intersection point is strictly better than any other point on the boundary.
For any $x>0$ and $\forall p'>p$, we have 
\begin{eqnarray*}
U(x,p')=x(p'-c) 
>x(p-c)=U(x,p). 
\end{eqnarray*}
Thus $U(x,p)$ is strictly increasing in $p$.
%
For any $x<x^*$ (points on the straight line) we have 
\begin{eqnarray*}
U(x^*,p^*)=x^*(p^*-c) > x(p^*-c)=U(x,p^*), 
\end{eqnarray*}
which used the fact that $p^*>c$. (Recall we have assumed that for any buyer there must exist a contract with $p>c$ that it finds acceptable.  This implies such a point must be within the acceptance region, which in turn implies that we must have $p^*>c$ since $p^*\geq p$, $\forall p$ in the region.)
For any pair $(x,p)$ such that $xp=x^*p^*$ and $x>x^*$ (points on the curve),
\begin{eqnarray*}
U(x,p) = x(p-c)
=x^*p^*-xc >  x^*(p^*-c)=U(x^*,p^*). 
\end{eqnarray*}
Thus $U(x^*,p^*)$ is strictly greater than any point $U(x,p)$ on the boundary. 
\end{proof}
Once the seller determines the optimal contract and presents it to the buyer, the buyer will accept because it satisfies both the loss and the IR constraints. It can be easily shown that the buyer's cost in accepting is exactly $C(0,0)$. 
Note that technically since the cost of the contract is exactly equal to the reserve price, the buyer is ambivalent between getting only deterministic service and getting a mix of both types of services.  In practice the seller can always lower the unit price $p^*$ by an arbitrarily small amount to provide a positive incentive so that the buyer will accept the contract. 
For this reason even though the costs are equal, for simplicity we will assume that the buyer will accept this contract.
For the same reason, we will also assume that when there exist multiple contracts of equal cost to the buyer, the seller can always induce the desired choice from the buyer by introducing a small difference to the desired contract. 
We have now a complete characterization of the contract design for a single type of buyer.  
We end this section by introducing the concept of an {\em equal-cost line} of a buyer.  Consider a contract $(x', p')$. Denote by $P(x',p',x)$ a price such that 
the contract $(x,P(x',p'$ $,x))$ has the same cost as contract $(x',p')$ to a buyer. 
\begin{definition}
The equal-cost line $E$ of a buyer of type $(q, \epsilon, b)$ is the set of contracts within the buyer's acceptance region $T$ that are of equal cost to the buyer. Thus $(x, p) \in E$ if and only if $p=P(x', p', x)$ for some other $(x', p')\in E$. The cost of this line is given by $C(x', p')$, $\forall (x', p')\in E$.
\end{definition}
It should be clear that there are many equal-cost lines, each with a different cost.  
Figure \ref{fig:equalcostline} shows an example of a set of equal-cost lines. 
We will therefore also write an equal-cost line as $E_{x', p'}$ for some $(x', p')$ on the line to distinguish it from other equal-cost lines. The next theorem gives a precise expression for the equivalent price that characterizes an equal-cost line.
\begin{figure}
\centering
\includegraphics[width=0.4\textwidth]{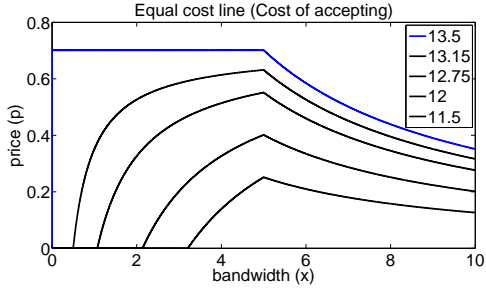}
\caption{Example of equal cost lines}
\label{fig:equalcostline}
\vspace{-20pt}
\end{figure}
\begin{theorem}\label{thm:single_buyer}
For a buyer of type $(q, \epsilon, b)$ with an intersection point $(x^*, p^*)$ on its acceptance region boundary, and given a contract $(x', p')$, an equal-cost line $E_{x', p'}$ consists of all contracts $(x, P(x',p',x))$ such that
\begin{eqnarray*}
P(x',p',x)=
\left\{\begin{tabular}{ll}
$b-\frac{x'}{x}(b-p')$ & if $x,x'\leq x^*$\\
$x'p'/x$ & if $x,x'\geq x^*$\\
$(b(x^*-x')+x'p')/x$ & if $x'<x^*<x$\\
$b-(x^*b-x'p')/x$ & if $x<x^*<x'$
\end{tabular}\right.
\end{eqnarray*}\label{Thm:equal_cost}
\end{theorem}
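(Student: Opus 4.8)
The plan is to first derive a closed form for the buyer's cost function $C(x,p)$, and then obtain the equal-cost lines simply by imposing $C(x,p)=C(x',p')$ and solving for $p$. Fix $x>0$. The inner problem defining $C(x,p)$ minimizes $y+xp$, an increasing function of $y$, over $y\ge 0$ subject to $g(y):=b(q-y-x)^+ +(1-b)(q-y)^+\le\epsilon$. Since $g$ is continuous, convex and non-increasing in $y$ (it is piecewise linear with breakpoints at $(q-x)^+$ and $q$), its sublevel set is an interval $[y^*,\infty)$, so the optimum is $y^*=\min\{y\ge 0:g(y)\le\epsilon\}$. Crucially $g$, hence $y^*$, does not depend on $p$, so $C(x,p)=xp+y^*(x)$ is strictly increasing in $p$ for each fixed $x$; this last fact will be what guarantees that, for each $x$, there is a unique price placing $(x,p)$ on a prescribed equal-cost line.

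Next I would evaluate $y^*(x)$. On $0\le y\le q-x$ one has $g(y)=q-y-bx$, and on $q-x\le y\le q$ one has $g(y)=(1-b)(q-y)$; comparing $\epsilon$ against $g$ at the breakpoints, and using that the sign of $q(1-b)-\epsilon$ fixes the ordering of the thresholds $(q-\epsilon)/b$ and $\epsilon/(1-b)$, one obtains, with $x^*$ the knee of Theorem~\ref{thm:acceptance_region},
\begin{equation*}
C(x,p)=
\begin{cases}
q-\epsilon-x(b-p), & x\le x^*,\\
xp, & x\ge x^*,
\end{cases}
\end{equation*}
when $q(1-b)\le\epsilon$ (so $x^*=(q-\epsilon)/b$), and
\begin{equation*}
C(x,p)=
\begin{cases}
q-\epsilon-x(b-p), & x\le x^*,\\
q-\frac{\epsilon}{1-b}+xp, & x\ge x^*,
\end{cases}
\end{equation*}
when $q(1-b)>\epsilon$ (so $x^*=\epsilon/(1-b)$). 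A one-line check using $x^*b=q-\epsilon$ in the first case and $x^*(1-b)=\epsilon$ in the second shows the two branches agree at $x=x^*$, so $C$ is continuous.

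Finally I would impose $C(x,p)=C(x',p')$. If $x,x'\le x^*$ this reads $x(b-p)=x'(b-p')$, i.e.\ $p=b-\frac{x'}{x}(b-p')$. If $x,x'\ge x^*$, the additive constants cancel in either regime and the equation reduces to $xp=x'p'$, i.e.\ $p=x'p'/x$. In the two mixed cases one equates a ``$\le x^*$'' branch with a ``$\ge x^*$'' branch; the leftover constant combines to $x^*b$ in both regimes --- using $q-\epsilon=x^*b$ when $q(1-b)\le\epsilon$, and $\frac{\epsilon}{1-b}-\epsilon=\frac{b\epsilon}{1-b}=x^*b$ when $q(1-b)>\epsilon$ --- which yields $xp=x^*b-x'(b-p')=b(x^*-x')+x'p'$ when $x'<x^*<x$, and $x(b-p)=x^*b-x'p'$, i.e.\ $p=b-(x^*b-x'p')/x$, when $x<x^*<x'$. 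These are exactly the four stated expressions for $P(x',p',x)$; intersecting $\{(x,P(x',p',x))\}$ with the acceptance region $T$, and invoking the uniqueness of the price at each $x$ (strict monotonicity of $C$ in $p$), identifies this set as the equal-cost line $E_{x',p'}$.

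The main obstacle is the bookkeeping in the second step: putting $C(x,p)$ into the clean two-branch form requires tracking which breakpoint of $g$ is active, which depends on the position of $x$ relative to $q$ and to both thresholds $(q-\epsilon)/b$ and $\epsilon/(1-b)$, and on the sign of $q(1-b)-\epsilon$ (one must in particular verify the branches match at $x=x^*$ so that $C$ is continuous and the piecewise description is unambiguous). Once $C$ is in closed form, the remainder is the elementary algebra of equating two expressions affine in $p$ and substituting the identity $x^*b=q-\epsilon$ (respectively $x^*(1-b)=\epsilon$).
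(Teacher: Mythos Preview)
Your proposal is correct and follows essentially the same route as the paper: compute the buyer's optimal supplemental purchase $y^*(x)$, obtain the closed form $C(x,p)=q-\epsilon-x(b-p)$ for $x\le x^*$ and $C(x,p)=xp$ (respectively $q-\epsilon/(1-b)+xp$) for $x\ge x^*$, and then equate $C(x,p)=C(x',p')$ branch by branch. The paper carries this out only for the regime $q(1-b)\le\epsilon$ and handles the two mixed cases by chaining through the knee $x^*$ rather than by direct substitution; your treatment is a bit more complete in that you work both regimes explicitly and use the identity $x^*b=q-\epsilon$ (resp.\ $x^*b=b\epsilon/(1-b)$) to unify the mixed-case algebra, but the underlying argument is the same.
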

\vspace{-10pt}
\begin{proof}
We will prove this for the case $q(1-b)\leq \epsilon$; the other case can be shown with similar arguments and is thus omitted for brevity. In this case $x^*= \frac{q-\epsilon}{b}$.
When $x,x'\leq x^*$, without buying deterministic service the loss is given by
\begin{eqnarray*}
E[(q-xB)^+]&=&(q-x)^+b+q(1-b) \\
&=& (q-x)b+q(1-b) 
= q-xb \geq \epsilon, 
\end{eqnarray*}
where the second equality is due to the fact that $q(1-b)\leq \epsilon \Rightarrow \frac{q-\epsilon}{b} \leq q \Rightarrow x\leq \frac{q-\epsilon}{b}\leq q$.
The incentive for the buyer is to purchase $y$ such that the loss is just equal to $\epsilon$.
\begin{eqnarray*}
E[(q-y-xB)^+]&=&(q-y-x)b+(q-y)(1-b)\\
&=&q-y-xb=\epsilon~. 
\end{eqnarray*}
The first equality follows from the fact that $q(1-b)\leq \epsilon$, which implies both $(q-y-x)\geq 0$ and $(q-y)\geq 0$. This is true for both $(x,p)$ and $(x',p')$. Since $(x,p)$ is on the equal cost line $E_{x',p'}$, we know that $C(x,p)=C(x',p')$. We also know that $C(x,p)=y+xp$ and $C(x',p')=y'+x'p'$,
\begin{eqnarray*}
C(x,p)=q-\epsilon-xb+xp=q-\epsilon-x'b+x'p'=C(x',p')~. 
\end{eqnarray*}
Rearranging the second equality such that $p$  is a function of $x,x',p'$ immediately gives the result. When $x,x' > x^*$, $x$ ($x'$) alone is sufficient to achieve the loss constraint. For $C(x,p)=C(x',p')$ we must have $x'p'=xp$, resulting in the second branch. The third and fourth branch can be directly derived from the first two branches. When $x>x^*>x'$ ($x'>x^*<x$), we first find the equivalent price at $x^*$ by the first branch (second branch), and then use the second branch (first branch) to find $P(x',p',x)$. This gives the third branch (fourth branch)\qed
\end{proof}
\rev{The form of the equal-cost line is the same regardless whether $q(1-b)\leq \epsilon$ or $q(1-b)> \epsilon$.} Note that every contract below an equal-cost line is strictly preferable to a contract on the line for the buyer.  This is an observation we will use in subsequent sections. We end this section with a property of the equivalent price we will use later.
\begin{lemma}
$P(x',p',x)$ is strictly increasing in $p'$ when $x'> 0$. \label{Lemma:strict_increasing}
\end{lemma}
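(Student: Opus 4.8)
The plan is to obtain the monotonicity directly from the closed form for the equivalent price established in Theorem~\ref{thm:single_buyer}, without any fresh optimization argument. The key structural observation is that, for a fixed buyer type (hence a fixed $x^*$) and fixed values of $x$ and $x'$, the branch of the piecewise formula for $P(x',p',x)$ that applies is determined entirely by the positions of $x$ and $x'$ relative to $x^*$: the branch conditions $x,x'\le x^*$, $x,x'\ge x^*$, $x'<x^*<x$, $x<x^*<x'$ never mention $p'$. Consequently, as $p'$ varies we remain within one and the same branch, so it suffices to check each branch separately.

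Next I would expand the four expressions $b-\frac{x'}{x}(b-p')$, $x'p'/x$, $(b(x^*-x')+x'p')/x$, and $b-(x^*b-x'p')/x$ and simply read off the coefficient of $p'$. In every one of the four cases $P(x',p',x)$ is an affine function of $p'$ whose slope equals exactly $x'/x$. Since $x'>0$ by hypothesis and $x>0$ (the equivalent price is only meaningful for a positive amount of traded bandwidth), this slope is strictly positive; hence $P(x',p',x)$ is strictly increasing in $p'$ on each branch, and therefore strictly increasing in $p'$ overall. Conceptually this just says that a more expensive anchor contract induces a uniformly more expensive equal-cost line.

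Finally, I would note that, by the remark preceding the lemma, the same closed form for the equal-cost line holds in the regime $q(1-b)>\epsilon$ (with $x^*$ replaced by $\epsilon/(1-b)$), so the identical branch-by-branch argument disposes of that case as well.

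The argument is essentially a one-line computation per branch, so I do not anticipate a real obstacle; the only point deserving a moment's care is the verification that perturbing $p'$ cannot push one across a branch boundary, which is immediate once one observes that the branch conditions are phrased purely in terms of $x$, $x'$, and $x^*$.
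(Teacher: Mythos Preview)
Your argument is correct: in each of the four branches of Theorem~\ref{Thm:equal_cost} the expression is affine in $p'$ with coefficient exactly $x'/x>0$, and since the branch selection depends only on the positions of $x$ and $x'$ relative to $x^*$, varying $p'$ never crosses a branch boundary. This is a valid proof.

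The paper, however, takes a shorter and more conceptual route. It argues directly from the definition of the cost: $C(x',p')=y+x'p'$, where the optimal supplementary purchase $y$ solving the loss constraint depends only on $x'$ (not on $p'$). Hence $C(x',p')$ is strictly increasing in $p'$ whenever $x'>0$. Since the equal-cost line is defined by $C(x,P(x',p',x))=C(x',p')$ and $C(x,\cdot)$ is likewise strictly increasing, $P(x',p',x)$ must be strictly increasing in $p'$. The paper's approach avoids the case split entirely and does not rely on the explicit formulas of Theorem~\ref{Thm:equal_cost}; your approach, while slightly longer, has the minor bonus of exhibiting the exact slope $x'/x$, which makes the monotonicity quantitative.
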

\vspace{-10pt}
This lemma is easily shown by noting $C(x',p')=y+x'p'$, where $y$ is only a function of $x'$. Thus, $p>p'$ implies $C(x',p)>C(x',p')$ when $x'>0$.
%

\section{Common channel condition} \label{sec:public}


We now consider $K$ types of buyers indexed by $i=1, 2, \cdots, K$, each defined by the triple $(q_i, \epsilon_i, b_i)$ with an associated acceptance region $T_i$. 
We will use the notation 
\begin{eqnarray*}
max_i=(x_i^*,p_i^*) = \mbox{argmax}_{(x, p)\in T_i} U(x, p)
\end{eqnarray*} 
to denote the optimal contract if type $i$ were the only type existing. Similarly, we will use $C_i(x,p)$ to denote the cost to a type-$i$ buyer for accepting contract $(x, p)$.
%

A buyer is of type $i$ with probability $r_i$. We assume that the seller knows only this distribution of  types but not the actual type of a given buyer.  Consequently it has to 
design the contracts in a way that maximizes its expected payoff.  Since the payoff is measured in expectation, it turns out that it does not matter whether the seller is faced with a single buyer or multiple buyers as long as they are drawn from the same, known type distribution and the seller has sufficient bandwidth to honor its contracts. 
For this reason throughout our discussion we will take the view of a single buyer drawn from a certain type distribution. In Section \ref{sec:discussion} we discuss the case when the seller has limited bandwidth to trade.  

Consider a set of contracts 
$\mathbb{C}=\{(x_1,p_1),...,(x_K,p_K)\}$
designed by the seller with the intention that a buyer of type $i$ prefers $(x_i,p_i)$.  
This is true iff 
\begin{eqnarray*}
C_i(x_i,p_i)\leq C_i(x_j,p_j) &  \forall j\neq i ~. 
\end{eqnarray*}
Let $R_i(\mathbb{C})$ denote the contract that a type-$i$ buyer selects given a set 
$\mathbb{C}$. Then $R_i(\mathbb{C})=\text{argmin}_{(x,p)\in\mathbb{C}}~C_i(x,p)$
and the seller's expected utility for a given $\mathbb{C}$ is 
$E[U(\mathbb{C})]=\sum_iU(R_i(\mathbb{C}))r_i$

In this section we consider the case where different types share the same channel condition $b_i=b, i=1, \cdots, K$, which is also known to the seller.  As mentioned earlier, this models the case where the condition is primarily determined by the seller's primary user traffic.  An example of the acceptance regions of three buyer types are shown in Figure \ref{fig:sameb}; note that $max_i$'s need not be ordered in $i$. 
There are two possible cases: (1) the seller can announce as many contracts as he likes ($M=K$; note that there is no point in designing more contracts than \rev{the number of types}); (2) the seller is limited to at most $M<K$ contracts.  Below we fully characterize the optimal contract set in both cases.
\begin{figure}[h!]
\vspace{-5pt}
  \centering
    \includegraphics[width=0.4\textwidth]{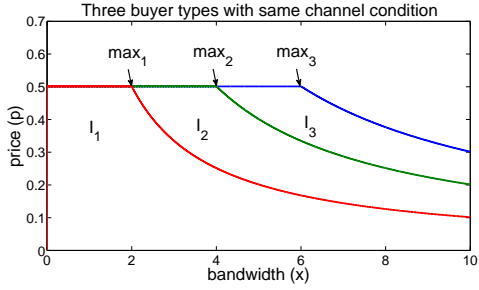}	
  \caption{Three buyer types with common $b$}
\label{fig:sameb}
\vspace{-20pt}
\end{figure}


\begin{theorem}
When $M=K$, the contract set that maximizes the seller's profit is $(max_1, max_2, ..., max_K)$.
\end{theorem}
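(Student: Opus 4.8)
The plan is to prove optimality by sandwiching: show that every contract set $\mathbb{C}$ gives the seller expected revenue at most $\sum_i r_i\,U(max_i)$, and that $\mathbb{C}^*=\{max_1,\dots,max_K\}$ attains this bound. The upper bound is the easy half and uses nothing special about the common-$b$ setting. When a type-$i$ buyer is presented with $\mathbb{C}$, the contract it actually uses is either the cheapest offered contract $(x,p)$ in case that contract is acceptable, in which case $(x,p)\in T_i$ and the seller collects $U(x,p)\le\max_{(x',p')\in T_i}U(x',p')=U(max_i)$ by definition of $max_i$, or --- if no offered contract is acceptable --- the buyer buys only from the reference market and the seller collects $U(0,0)=0\le U(max_i)$. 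Weighting by $r_i$ and summing over $i$ gives the bound.

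For the attainability half, the common-$b$ assumption enters through one structural observation read off from Theorem \ref{thm:acceptance_region}: for every type $i$ the intersection point is $max_i=(x_i^*,b)$, i.e.\ its price coordinate equals $b$ in both of the cases $q_i(1-b)\le\epsilon_i$ and $q_i(1-b)>\epsilon_i$. Thus all $K$ contracts of $\mathbb{C}^*$ lie on the single horizontal line $p=b$, and along this line $U(x,b)=x(b-c)$ is strictly increasing in $x$ (here $b>c$, since type $i$ admits an acceptable contract with $p>c$ while every acceptable contract has $p\le b$). I would then pin down, for a fixed type $i$, the cost $C_i(max_j)$ of each member of $\mathbb{C}^*$: if $x_j^*\le x_i^*$ then $(x_j^*,b)$ lies on the flat segment $\{(x,b):x\le x_i^*\}$ of the boundary of $T_i$, and --- by Theorem \ref{thm:single_buyer} applied with reference point $max_i$ (which makes that whole segment the portion of the equal-cost line through $max_i$) --- its cost equals $C_i(max_i)=C_i(0,0)$; if $x_j^*>x_i^*$ then $(x_j^*,b)$ lies strictly above the decaying branch of the boundary of $T_i$, hence outside $T_i$, so $C_i(max_j)>C_i(0,0)$.

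Combining these, when a type-$i$ buyer faces $\mathbb{C}^*$ the contracts it is willing to accept are exactly $\{max_j:x_j^*\le x_i^*\}$, all of cost precisely $C_i(0,0)$, so the buyer is indifferent among them (and with the reference market), and among them $U(max_j)=x_j^*(b-c)$ is maximized at $j=i$. By the paper's tie-breaking convention --- equivalently, by shaving the unit price of $max_i$ by an arbitrarily small $\delta>0$ to break the indifference in the seller's favor and letting $\delta\to0$ --- the seller induces each type $i$ to select $max_i$ and collects $U(max_i)$, so $E[U(\mathbb{C}^*)]=\sum_i r_i\,U(max_i)$, matching the upper bound. I expect the genuinely delicate point to be the cost computation above: it is not enough that $max_j\in T_i$, one needs the \emph{exact} identity $C_i(max_j)=C_i(0,0)$ for the acceptable members so that the indifference/tie-breaking argument applies, and this is precisely where the argument must lean on the exact boundary shape of Theorem \ref{thm:acceptance_region} together with the equal-cost-line formula of Theorem \ref{thm:single_buyer}, not merely on set membership.
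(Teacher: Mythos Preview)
Your proposal is correct and follows essentially the same approach as the paper: establish the upper bound $\sum_i r_i\,U(max_i)$ from the definition of $max_i$, then use the common-$b$ structure (all knees on the line $p=b$, points to the left equal in cost, points to the right inadmissible) together with the tie-breaking convention to show the bound is attained by $\{max_1,\dots,max_K\}$. The paper presents this only as a brief sketch after the theorem statement; you have simply made each step explicit, including the justification that $b>c$ and the exact cost identity $C_i(max_j)=C_i(0,0)$ for $x_j^*\le x_i^*$ via the equal-cost line.
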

This result holds for the following reason. As shown in Figure \ref{fig:sameb}, with a constant $b$, the intersection points of all acceptance regions are on the same line $p=b$.  For a buyer of type $i$, all points to the left of $max_i$ on this line cost the same as $max_i$, and all points to its right are outside the buyer's acceptance region.  Therefore the type-$i$ buyer will select the contract $max_i$ given this contract set (see earlier discussion on how the seller can always incentivize this contract over others with equal cost). Since this is the best the seller can do with a type-$i$ buyer (see Theorem \ref{thm:single_buyer}) this set is optimal for the seller. It is also relatively straightforward to obtain a similar results in the case of $M<K$ given next.


\begin{lemma}
When $M<K$ and $\forall b_i=b$, the optimal contract set is a subset of $(max_1,..., max_K)$.\label{lemma:m<k}
\end{lemma}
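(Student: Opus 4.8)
The plan is to show that any optimal set of $M<K$ contracts can be transformed, without decreasing the seller's expected utility, into a set all of whose members are of the form $max_i$ for some $i$. Fix an optimal contract set $\mathbb{C}=\{(x_1,p_1),\ldots,(x_M,p_M)\}$ and let $R_i(\mathbb{C})$ be the contract chosen by a type-$i$ buyer. Partition the buyer types into groups according to which contract they select: group $G_m=\{i: R_i(\mathbb{C})=(x_m,p_m)\}$. The key structural fact I would use is the one established for the common-$b$ setting: all intersection points $max_i=(x_i^*,b)$ lie on the single horizontal line $p=b$, and for a type-$i$ buyer every point on that line to the left of $max_i$ has the same cost as $max_i$ (namely the reserve price $C_i(0,0)$), while everything strictly to the right of $max_i$ on that line — indeed everything above the acceptance-region boundary — is unacceptable.

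First I would argue that in an optimal set every contract actually used by some type can be taken to lie on the line $p=b$. Suppose contract $(x_m,p_m)$ is selected by the types in $G_m$ and $p_m<b$. Let $i^\dagger=\arg\min_{i\in G_m} x_i^*$ be the type in that group with the smallest intersection abscissa. Replacing $(x_m,p_m)$ by $max_{i^\dagger}=(x_{i^\dagger}^*,b)$ keeps the contract acceptable to every type in $G_m$ (for each such $i$ we have $x_{i^\dagger}^*\le x_i^*$, so the point is on $p=b$ at or left of $max_i$, hence acceptable at cost equal to that type's reserve price), and it does not make the contract more attractive to any type outside $G_m$ in a way that hurts the seller — here I would invoke Lemma~\ref{Lemma:strict_increasing} (equivalent price strictly increasing in $p'$) together with the fact that raising the price weakly raises every other type's cost for this contract, so no type is lured away from a higher-utility contract. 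Since $U(x,p)=x(p-c)$ is increasing in $p$, the swap does not decrease, and generically strictly increases, the seller's utility from the types in $G_m$. Iterating over all $m$ yields an optimal set lying entirely on $p=b$.

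Next, among points on $p=b$, I would push each used contract rightward to the largest abscissa still acceptable to all of its customers, i.e. to $x_{i^\dagger}^*$ with $i^\dagger$ as above; this is exactly $max_{i^\dagger}$. Moving right along $p=b$ within the common acceptable range does not change any customer's cost (all such points cost the reserve price to them), so nobody's choice flips, while the seller's utility $x(b-c)$ strictly increases with $x$. After this step every contract in the set equals $max_j$ for some $j$, so the optimal set is a subset of $\{max_1,\ldots,max_K\}$, which is the claim.

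The main obstacle I anticipate is the bookkeeping in the first paragraph: when I raise $p_m$ up to $b$ (or shift a contract), I must verify that no buyer type currently outside $G_m$ is induced to switch to the modified contract in a way that reduces the seller's revenue. Because all these contracts will end up on $p=b$ and cost every type either its reserve price or nothing better, the worst that can happen is that a type becomes indifferent, which by the paper's tie-breaking convention the seller can resolve in its favor; making this precise — and handling the case where the modification leaves some group $G_m$ empty, so that contract can simply be deleted (legal since we only have $M<K$ slots and deleting never hurts) — is the part that needs care. Everything else is a direct consequence of $U$ being increasing in both $x$ and $p$ on the line $p=b$ and of Theorem~\ref{thm:single_buyer} identifying $max_i$ as the per-type optimum.
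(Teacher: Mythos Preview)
Your overall plan---replace every contract by some $max_j$ without lowering expected profit---is right, but the step-by-step execution has a real gap, and the obstacle you flag is not the one that bites. You worry that after replacing $(x_m,p_m)$ by $max_{i^\dagger}$ a type \emph{outside} $G_m$ might switch \emph{to} the new contract; but $max_{i^\dagger}$ has price $p=b$, so it costs every type that can accept it exactly its reserve price, and no type will strictly prefer it over any acceptable alternative. The actual danger is the reverse: a type \emph{inside} $G_m$ may switch \emph{away} from $max_{i^\dagger}$ to some not-yet-modified contract $(x_{m'},p_{m'})$ that is strictly cheaper than the reserve price, and $U(x_{m'},p_{m'})$ can be strictly smaller than $U(x_m,p_m)$. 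A two-type, two-contract example makes this concrete: take $b=0.5$, $c=0$, $x_1^*=5$, $x_2^*=10$, contracts $A=(7,0.4)$ (only type~2 can accept; $U=2.8$) and $B=(2,0.3)$ (both accept; $U=0.6$); type~1 picks $B$, type~2 picks $A$. Replacing $A$ by $max_2=(10,0.5)$ alone drives type~2 to $B$, and the seller's payoff from type~2 falls from $2.8$ to $0.6$. So a one-at-a-time swap, justified by ``raising $p$ weakly raises cost'' and Lemma~\ref{Lemma:strict_increasing}, does not prove monotone improvement.

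The paper sidesteps this by making all replacements simultaneously: it partitions $\mathbb{C}$ according to the region $I_j$ each contract lies in (rather than according to which buyers select it), replaces every contract in $I_j$ by $max_j$, and only then analyzes choices in the resulting set $\mathbb{C}'$. In $\mathbb{C}'$ every option lies on the line $p=b$, so every type is indifferent among its acceptable options; by the tie-breaking convention the seller steers each type~$i$ to the option with largest $x$, which is at least $U(max_j)\ge U(x_m,p_m)$ because $(x_m,p_m)\in I_j\subset T_j$ and $max_j$ is the maximizer over $T_j$. Your customer-group decomposition can be repaired along the same lines---perform all swaps at once, then argue in the all-$p=b$ set---but the sequential argument as written does not go through.
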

\begin{proof}
Assume the optimal contract $\mathbb{C}$ is not a subset of $(max_1,..., max_K)$. Then it must consists of some contract points from at least one of the $I_i$ regions as demonstrated in Figure \ref{fig:sameb}. Let these contracts be $A_i\subset I_i$ and $\bigcup_iA_i=\mathbb{C}$. For each non-empty $A_i$, we replace it by the contract $max_i$ and call this new contract set $\mathbb{C'}$. The proof is to show that this contract set generates profit at least as large as the original one. For each type-$i$ buyer that picked some contract $(x,p)\in A_j$ from the optimal contract $\mathbb{C}$, it must had a type greater than or equal to $j$ otherwise $(x,p)$ is not in its acceptance region. In the contract set $\mathbb{C'}$, type-$i$ will now pick $max_j$ or $max_l$ with $l>j$. The choice of each possible type of buyer picks from $\mathbb{C'}$ is at least as profitable as the one they picked from $\mathbb{C}$. Thus, the expected profit of $\mathbb{C'}$ is at least as good as $\mathbb{C}$.
\end{proof}

This lemma suggests the following iterative way of finding the optimal contract set without \rev{having} to solve what would seem like a combinatorial problem. 
Define function $g(m,i)$ as the the maximum expected profit for the seller by picking contract $max_i$ and selecting optimally $m-1$ contracts from the set $(max_{i+1},...,max_K)$. 
Note that if we include $max_i$ and $max_j$ ($i<j$) in the contract set but nothing else in between $i$ and $j$, then a buyer of type $l$ ($i\leq l<j$) will pick contract $max_i$.  These types contribute to an expected profit of $x_i^*(b-c) \sum_{l=i}^{j-1}r_l$. At the same time, no types below $i$ will select $max_i$ (as it is outside their acceptance regions), and no types at or above $j$ will select $max_i$ (as for them $max_j$ is preferable).

The function $g(m,i)$ can be recursively obtained as follows: 
\begin{eqnarray*}
g(m,i)=\underset{j: i<j\leq K-m+2}{max} g(m-1,j)+x_i^*(b-c)\sum_{l=i}^{j-1}r_l , 
\end{eqnarray*}
with the boundary condition $g(1,i)=x_i^*(b-c)\sum_{l=i}^{K}r_l$.

Finally, it should be clear that the maximum expected profit for the seller is given by $\max_{1\leq i\leq K} g(M,i)$, and the optimal contract set can be determined by going backwards: first determine $i^*_M=\arg\max_{1\leq i\leq K} g(M, i)$, then $i^*_{M-1} = \arg\max_{1\leq i\leq K-1} g(M-1, i)$, and so on.

\begin{theorem}
The set $\{max_{i^*_1}, max_{i^*_2}, \cdots, max_{i^*_M}\}$ obtained using the above procedure is optimal and its expected profit is given by $g(M, i^*_M)$. 
\end{theorem}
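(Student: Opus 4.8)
The plan is to collapse the problem to a clean combinatorial optimization over ordered subsets of $\{max_1,\dots,max_K\}$, exhibit an additively separable formula for the seller's expected profit along consecutive chosen indices, and then recognize the recursion for $g$ as the exact Bellman equation for that optimization.

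\emph{Step 1 (reduction).} By Lemma \ref{lemma:m<k} when $M<K$, and by the preceding theorem when $M\ge K$, an optimal contract set may be taken to be a subset of $\{max_1,\dots,max_K\}$ (so it suffices to treat $M\le K$). I would first relabel the types so that $x_1^*\le\cdots\le x_K^*$. Under the common-$b$ assumption, Theorem \ref{thm:acceptance_region} shows $T_i$ is completely determined by $x_i^*$ (its boundary is $p=b$ for $x\le x_i^*$ and $p=x_i^* b/x$ for $x>x_i^*$), so the relabeling makes $T_1\subseteq\cdots\subseteq T_K$, and a contract $max_i=(x_i^*,b)$ lies in $T_l$ iff $i\le l$. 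Moreover every such $max_i$ (with $i\le l$) sits on the flat part $p=b$ of $T_l$'s boundary and hence costs a type-$l$ buyer exactly its reserve price $q_l-\epsilon_l$; so among the offered contracts it can accept, the seller may — via the tie-breaking convention already adopted — steer a type-$l$ buyer to whichever has the largest $x$-coordinate, i.e. the largest offered index that is $\le l$.

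\emph{Step 2 (separable profit formula).} For any index set $S=\{i_1<\cdots<i_m\}$, Step 1 gives that a type-$l$ buyer selects $max_{i_k}$ with $i_k=\max\{i\in S:i\le l\}$, contributing $0$ when $l<i_1$; since $U(x_i^*,b)=x_i^*(b-c)$, the expected profit of $S$ equals $\sum_{k=1}^{m}x_{i_k}^*(b-c)\sum_{l=i_k}^{i_{k+1}-1}r_l$ with $i_{m+1}=K+1$. The crucial feature is that the $k$-th summand depends only on $i_k$ and $i_{k+1}$, with no interaction beyond adjacent chosen indices. I would also record that inserting an extra index into $S$ never decreases the profit (a newly inserted index only re-routes a block of types to a contract whose $x$-coordinate is no smaller than the one they had, and $b>c$), so when $M\le K$ there is no loss in assuming $|S|=M$.

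\emph{Step 3 (the recursion is the Bellman equation, and optimality).} With $g(m,i)$ defined as the best expected profit from offering $max_i$ together with an optimal choice of $m-1$ further contracts from $\{max_{i+1},\dots,max_K\}$: if the second-smallest offered index is $j$, then exactly the types $l\in[i,j-1]$ pick $max_i$, contributing $x_i^*(b-c)\sum_{l=i}^{j-1}r_l$, while the remaining $m-1$ contracts (smallest index $j$) contribute precisely $g(m-1,j)$; this decoupling is legitimate because, by Step 1, no type $\ge j$ ever selects $max_i$ and no type $<i$ selects anything in $\{max_i,\dots,max_K\}$, so $max_i$ does not perturb the sub-instance on $\{max_j,\dots,max_K\}$. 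Maximizing over $j$ (the range $i<j\le K-m+2$ merely ensures $m-1$ further indices fit) gives the stated recursion, with base case $g(1,i)=x_i^*(b-c)\sum_{l=i}^K r_l$ corresponding to offering only $max_i$ among $\{max_i,\dots,max_K\}$. Finally, any optimal subset of $\{max_1,\dots,max_K\}$ has some smallest index $i$ and, by Step 2, WLOG size $M$, so its profit is $\le g(M,i)\le\max_{1\le i\le K}g(M,i)$; conversely the backward-tracing construction returns a feasible set attaining $\max_i g(M,i)=g(M,i^*_M)$, which is the standard witness-optimality argument for a DP. This proves both assertions.

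The main obstacle is Step 2 and its prerequisite in Step 1: one must pin down that (i) a type-$l$ buyer's set of acceptable contracts among $\{max_i\}$ is an initial segment $\{i\le l\}$, and (ii) all of them cost it the same, so that the seller can (via tie-breaking) route it to the most profitable one with no side effects on other types — this "no cross-term" structure is exactly what makes the additive decomposition, and hence the recursion, valid. Once that is secured, the recursion for $g$ and the optimality of its backward-traced solution are routine dynamic-programming bookkeeping, the only remaining care being boundary cases ($M\ge K$, reducing to the earlier theorem, and types below the smallest chosen index, which simply contribute zero).
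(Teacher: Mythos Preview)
Your proposal is correct and follows the same line the paper takes. In fact the paper does not give a formal proof of this theorem at all: it relies on Lemma~\ref{lemma:m<k} to restrict attention to subsets of $\{max_1,\dots,max_K\}$, then records in one paragraph the key ``no cross-term'' observation (types $l\in[i,j)$ choose $max_i$, types below $i$ reject it, types at or above $j$ prefer $max_j$), defines $g$ from that decomposition, and simply asserts the result. Your Steps~1--3 make this reasoning precise --- in particular you spell out the relabeling by $x_i^*$, the nesting $T_1\subseteq\cdots\subseteq T_K$, the tie-breaking that routes a type-$l$ buyer to the largest offered index $\le l$, and the standard Bellman/backtracking argument --- all of which the paper leaves implicit.
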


\section{Private channel condition} \label{sec:private} 

We now consider multiple buyer types each with a different channel condition $b_i$, $i=1, \cdots, K$.  We will start with the special case of $K=2$ and characterize the optimal contracts in this case.  Using these results we then construct an algorithm to compute a set of contracts for the case of $K\geq 2$. 

\subsection{Two buyer types: $K=2$}
Consider two buyer types $(q_i, \epsilon_i, b_i)$, $i=1, 2$, with probability $r_i$, $r_1 + r_2 = 1$.
We first consider the case that the seller is limited to one contract: $M=1$. 

\begin{figure}
\vspace{-20pt}
\centering
{\includegraphics[width=0.23\textwidth,height=0.2\textwidth]{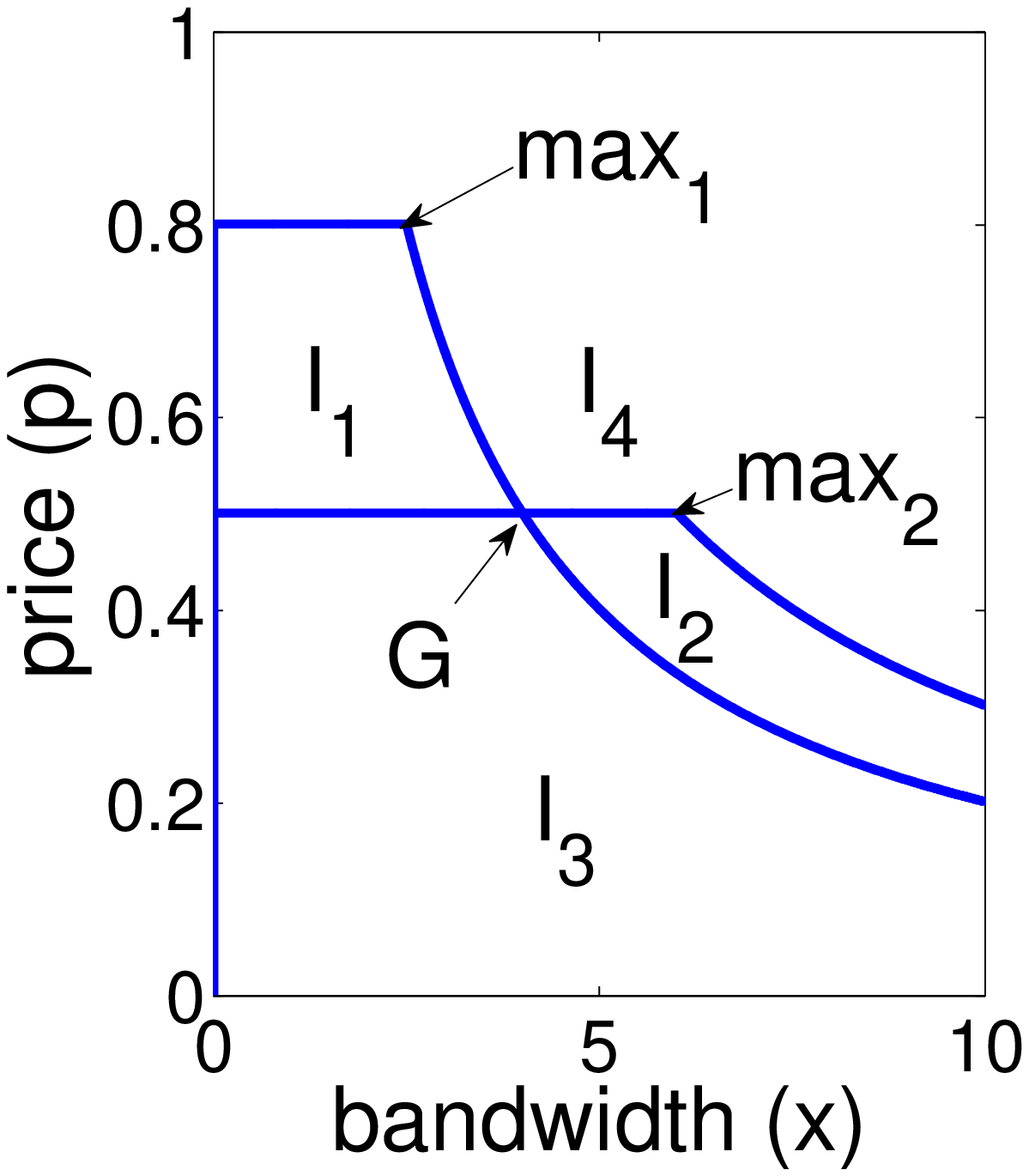}}
{\includegraphics[width=0.23\textwidth,height=0.2\textwidth]{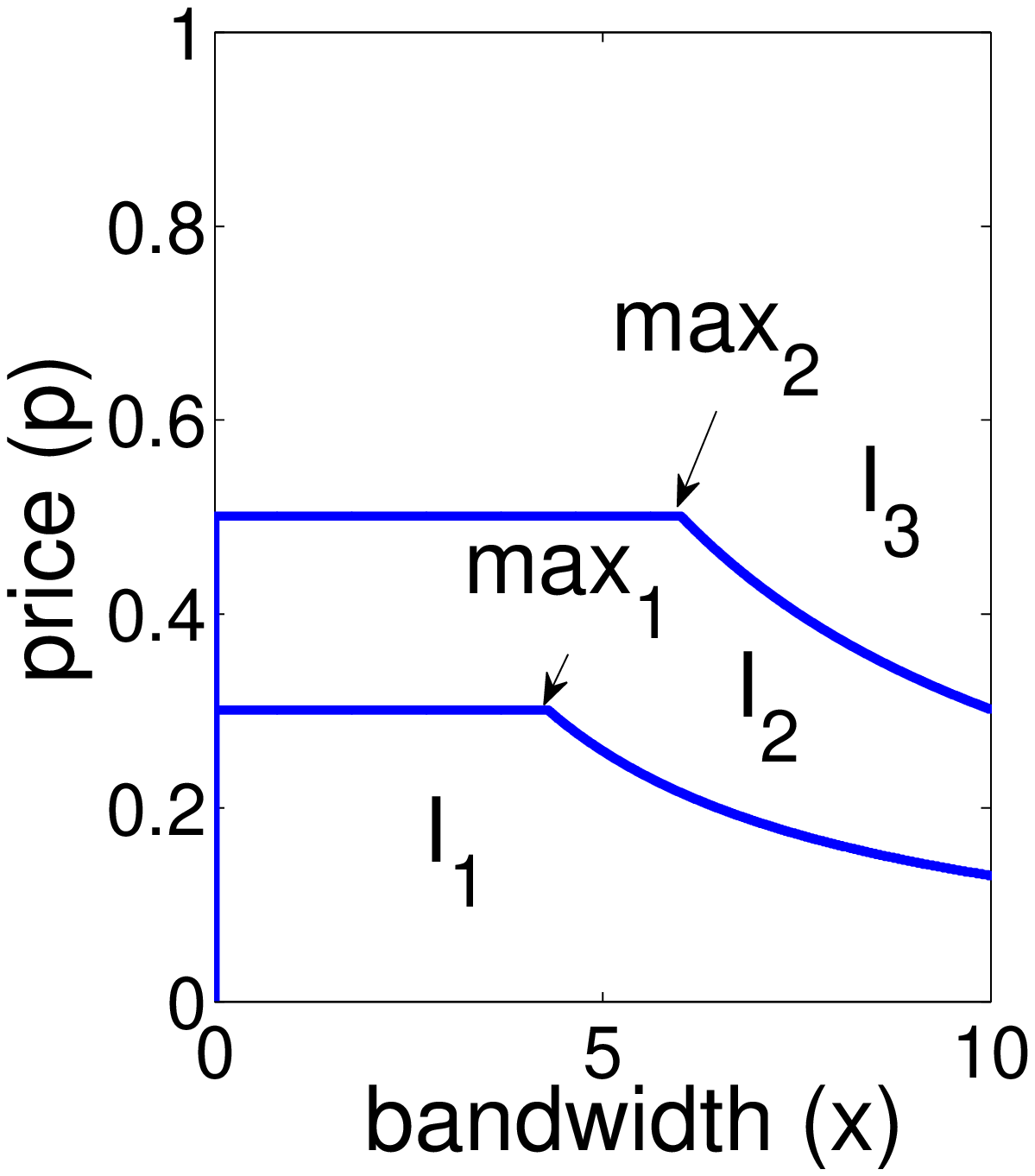}}
\caption{(left) $max_1\notin T_2$ and $max_2\notin T_1$;  (right) $max_1\in T_2$} 
\label{fig:nooverlap}
\vspace{-20pt}
\end{figure}
%


\begin{theorem}
The optimal contract when $K=2$ and $M=1$ is as follows: 
\begin{enumerate}
\item If $max_1\notin T_2$ and $max_2\notin T_1$, 
\begin{eqnarray*}
\mbox{optimal}=\left \{
\begin{tabular}{cl}
$max_1$ & \mbox{if} ~ $r_1U(max_1)\geq r_2U(max_2)$\\
& ~~~ \mbox{and} ~ $r_1U(max_1)\geq U(G)$\\
$max_2$ & if ~ $r_2U(max_2)\geq r_1U(max_1)$\\
& ~~~ \mbox{and} ~ $r_2U(max_2)\geq U(G)$\\
$G$ & if ~ $U(G)\geq r_2U(max_2)$\\
& ~~~  \mbox{and} ~ $U(G)\geq r_1U(max_1)$\\
\end{tabular}\right .
\end{eqnarray*}
where $G$ denotes the intersecting point between acceptance region boundaries of the two types.

\item If $max_1\in T_2$.
\begin{eqnarray*}
optimal=\left \{
\begin{tabular}{cc}
$max_1$ & if ~ $U(max_1)\geq r_2U(max_2)$\\
$max_2$ & if ~ $r_2U(max_2)\geq U(max_1)$\\
\end{tabular}\right .
\end{eqnarray*}

\item If $max_2\in T_1$.
\begin{eqnarray*}
optimal=\left \{
\begin{tabular}{cc}
$max_2$ & if $U(max_2)\geq r_1U(max_1)$\\
$max_1$ & if $r_1U(max_1)\geq U(max_2)$\\
\end{tabular}\right .
\end{eqnarray*}

\end{enumerate}

\end{theorem}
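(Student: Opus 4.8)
The plan is to prove optimality by case analysis, in each case enumerating the finite set of contracts that could conceivably be optimal under $M=1$, computing the seller's expected utility for each, and showing no other contract does better. The key structural fact is that with a single contract $(x,p)$, the seller's expected utility is $U(x,p)$ times the sum of $r_i$ over those types $i$ for which $(x,p)\in T_i$; so the only contracts worth considering are those maximizing $U$ over a given intersection of acceptance regions, and by Theorem~2 (and its proof) such maximizers are the ``knee'' points on the relevant boundaries.

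First I would treat case~1, where $max_1\notin T_2$ and $max_2\notin T_1$. Here no single contract in the interior of $T_1$ can also lie in $T_2$ except possibly along their common boundary. I would argue: a candidate contract either (a) lies in $T_1$ only, in which case the best such is $max_1$ yielding $r_1 U(max_1)$; (b) lies in $T_2$ only, best being $max_2$ yielding $r_2 U(max_2)$; or (c) lies in $T_1\cap T_2$. For (c) I must show the best contract in $T_1\cap T_2$ is the boundary-intersection point $G$: since $U$ is strictly increasing in $p$ (proof of Theorem~2) and, along either acceptance-region boundary curve, strictly increasing toward the knee, the maximizer of $U$ over $T_1\cap T_2$ must sit at the ``highest'' point of the overlap region, which—because $max_i\notin T_j$—is forced to be the crossing point $G$ of the two boundaries. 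This yields $U(G)$, and taking the max over (a),(b),(c) gives the stated trichotomy. I would note that $G$ is in both acceptance regions so both types buy it, hence the weight is $r_1+r_2=1$ and the payoff is plain $U(G)$.

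For case~2, $max_1\in T_2$: then $max_1$ lies in $T_1\cap T_2$, so offering $max_1$ is accepted by both types, giving $U(max_1)\cdot(r_1+r_2)=U(max_1)$. The only competitor is a contract aimed at type~2 alone; the best such is $max_2$, giving $r_2 U(max_2)$ (type~1 may or may not also accept $max_2$, but even if it does this only helps, and one checks $U(max_2)$ against $U(max_1)$ already covers that—more carefully, if $max_2\in T_1$ too we are also in case~3 and $U(max_2)\ge r_2U(max_2)$ is automatic while the comparison with $U(max_1)$ still decides). Since no contract can beat $U(max_1)$ among those accepted by both, and no contract aimed at type~$2$ alone beats $r_2U(max_2)$, the optimum is whichever is larger. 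Case~3 is symmetric with indices swapped.

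The main obstacle is case~1, step (c): rigorously showing that the $U$-maximizer over the overlap region $T_1\cap T_2$ is exactly the boundary crossing point $G$, rather than some point strictly inside the overlap or on a single boundary arc within the overlap. This requires using the explicit piecewise form of the acceptance-region boundaries from Theorem~1 together with the monotonicity of $U$ in $p$ and along each boundary curve (from the proof of Theorem~2), and checking that the geometry of two such regions whose knees lie outside each other forces their overlap's ``upper-right frontier'' to pass through the single crossing point $G$. Everything else is routine comparison of a handful of explicitly evaluated utilities.
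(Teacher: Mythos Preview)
Your proposal is correct and follows essentially the same approach as the paper: partition candidate contracts by which acceptance regions they lie in, identify the best contract in each piece as $max_1$, $max_2$, or $G$, and then take the maximum of the resulting expected profits. The paper's own argument is only a sketch (it asserts, for instance, that ``$G$ has a strictly higher profit than any other point in $I_3$'' without further justification), so your flagged ``main obstacle'' in case~1(c) is precisely the step the paper leaves to the reader; your plan to handle it via the monotonicity of $U$ in $p$ and along each boundary arc (from the proof of Theorem~2) is the natural way to fill that gap.
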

The above result is illustrated in Figure \ref{fig:nooverlap} and can be argued by showing the profit of every contract in a particular region (such as $I_1$) is no greater than some specific contract (such as $max_1$). Take the case $max_1\notin T_2$ and $max_2\notin T_1$ for example, any point in $I_3$ is suboptimal to point $G$ 
because any contract in $I_3$ is acceptable by both types of buyers, but $G$ has a strictly higher profit than any other point in $I_3$.


We now consider the case $M=2$. We shall see that providing multiple contracts can help the obtain higher profits.

\begin{theorem}
In the case of $M=2$, $max_1\notin T_2$ and $max_2\notin T_1$, the optimal contract set is 
$\{max_1, max_2\}$. 
\end{theorem}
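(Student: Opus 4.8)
The plan is to prove the statement by a short ``upper bound plus matching achievability'' argument; in fact the upper bound will hold for contract sets of arbitrary size, so the real content of the theorem is that two contracts already suffice and that $\{max_1,max_2\}$ is the right pair.

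First I would establish a universal upper bound: for \emph{every} admissible contract set $\mathbb{C}$, $E[U(\mathbb{C})] \le r_1 U(max_1) + r_2 U(max_2)$. The reasoning is that a type-$i$ buyer faced with $\mathbb{C}$ either accepts some offered contract or walks away to the reference market. If it accepts a contract $(x,p)\in\mathbb{C}$, then $(x,p)$ must lie in the acceptance region $T_i$, so by the very definition $max_i=\text{argmax}_{(x,p)\in T_i}U(x,p)$ we get $U(R_i(\mathbb{C}))=U(x,p)\le U(max_i)$; if it walks away, the seller obtains the reserve utility $0$, and $0<U(max_i)$ because $p_i^*>c$ (as established in Section~\ref{sec:single-type}, the standing assumption that each type admits an acceptable contract with $p>c$ forces $p_i^*>c$, hence $U(max_i)=x_i^*(p_i^*-c)>0$). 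Taking the $r_i$-weighted sum over $i=1,2$ yields the bound.

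Second, I would show that $\mathbb{C}^\star=\{max_1,max_2\}$, which is admissible since it contains $M=2$ contracts, attains this bound, and here the two hypotheses are used. Since $max_2\notin T_1$, i.e. $C_1(max_2)>C_1(0,0)$, a type-$1$ buyer can never accept $max_2$; and since $max_1\in T_1$ we have $C_1(max_1)\le C_1(0,0)$, so the type-$1$ buyer accepts $max_1$, invoking the paper's tie-breaking convention when $C_1(max_1)=C_1(0,0)$ (equivalently, perturb $p_1^*$ downward by an arbitrarily small amount). By the symmetric argument using $max_1\notin T_2$, a type-$2$ buyer accepts $max_2$. Hence $R_1(\mathbb{C}^\star)=max_1$ and $R_2(\mathbb{C}^\star)=max_2$, so $E[U(\mathbb{C}^\star)]=r_1U(max_1)+r_2U(max_2)$, which meets the upper bound; therefore $\mathbb{C}^\star$ is optimal.

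I do not expect a serious obstacle here; the only care needed is (i) folding the buyer's outside option into the analysis of $R_i(\cdot)$ so that the step $U(R_i(\mathbb{C}))\le U(max_i)$ remains valid even when a type rejects all offered contracts, and (ii) handling the indifference at the knee $C_i(max_i)=C_i(0,0)$ exactly as in Section~\ref{sec:single-type}. It is worth noting in the write-up that \emph{both} hypotheses are genuinely needed: if, say, $max_1\in T_2$, then a type-$2$ buyer may find $max_1$ no more costly than $max_2$ and select it, so $\{max_1,max_2\}$ no longer forces $R_2=max_2$ and the bound need not be met --- which is precisely why the overlapping cases are treated separately.
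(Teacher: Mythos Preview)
Your proposal is correct and follows essentially the same approach as the paper: both arguments hinge on (i) any contract accepted by type~$i$ lies in $T_i$ and hence yields seller utility at most $U(max_i)$, and (ii) the hypotheses $max_1\notin T_2$, $max_2\notin T_1$ force $R_i(\{max_1,max_2\})=max_i$. The only cosmetic difference is that you phrase it as ``upper bound plus achievability'' while the paper wraps the same inequality in a proof by contradiction; your write-up is arguably a bit cleaner in explicitly covering the walk-away case and the tie-breaking at the knee.
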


\begin{proof}
The set $\mathbb{C}=\{max_1, max_2\}$ gives an expected payoff of 
\begin{eqnarray*}
E[U(\mathbb{C})]&=& r_1U(R_1(\mathbb{C}))+r_2U(R_2(\mathbb{C})))\\&=&r_1U(R_1(max_1))+r_2U(R_2(max_2)). 
\end{eqnarray*}
The second equality holds because $max_1\notin T_2$ and $max_2$ $\notin T_1$ and thus type $i$ will pick $max_i$. 
Suppose $\mathbb{C}$ is not the optimal set of 2 contracts, then there must exists some $\mathbb{C'}=\{(x_1,p_1), (x_2,p_2)\}$ such that 
\begin{eqnarray*}
E[U((\mathbb{C'}))]&=&r_1U(R_1(x_1,p_1))+r_2U(R_2(x_2,p_2)) \\
&>& E[U(\mathbb{C})]\\
&=& r_1U(R_1(max_1))+r_2U(R_2(max_2)) 
\end{eqnarray*} 
This implies either $U(R_1(x_1,p_1))>U(R_1(max_1))$, or $U(R_2(x_2,p_2))> U(R_2(max_2))$, or both, all of which contradict the definition of $max_i$.
Thus, $\{max_1, max_2\}$ is the optimal contract set.
\end{proof}

The proof as well as the intuition behind the above result are straightforward.  The next case, 
$M=2$, $max_1\in T_2$ or $max_2\in T_1$, is more complicated. 
%
Without loss of generality, we will assume that the type-$1$ buyer has a smaller $b_1$ ($b_1\leq b_2$), thus $max_1\in T_2$. 
We first determine the optimal contract when $x^*_1\leq x^*_2$; this result is then used for the case when $x_1^*>x_2^*$.  Without loss of optimality we consider only contract pairs $\{(x_1,p_1),(x_2,p_2)\}$ where type-$i$ buyer picks $(x_i,p_i)$ instead of the other one. 

To find the optimal contract, we 1) first show that for each $(x_1,p_1)$ we can express the optimal $(x_2,p_2)$ in terms of $x_1$ and $p_1$; 2) then we show that $(x_1,p_1)$ must be on the boundary of $T_1$ with $x_1\leq x_1^*$;  3) using 1) and 2) we optimize the expected profit over possible choices of $x_1$. 

\begin{lemma}
When $K=2$, if $max_1 \in T_2$ and $x_1^*\leq x_2^*$, then given a contract for type-1 $(x_1,p_1)$, the optimal contract for type-$2$ must be $(x_2^*,P_2(x_1,p_1,x_2^*))$.\label{Lemma:x_2-p_2}
\end{lemma}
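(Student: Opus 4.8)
The plan is to solve, for a fixed type-1 contract $(x_1,p_1)$, the constrained problem $\max U(x_2,p_2)$ over all $(x_2,p_2)$ for which the pair is separating in the sense of the standing assumption: type-2 weakly prefers $(x_2,p_2)$, i.e. $(x_2,p_2)\in T_2$ and $C_2(x_2,p_2)\le C_2(x_1,p_1)$, and type-1 does not deviate, i.e. $C_1(x_2,p_2)\ge C_1(x_1,p_1)$. Since $R_1$ and hence $U(x_1,p_1)$ is held fixed, maximizing $r_1U(x_1,p_1)+r_2U(x_2,p_2)$ is the same as maximizing $U(x_2,p_2)$. There is no loss in assuming $(x_1,p_1)\in T_2$: if $(x_1,p_1)\notin T_2$ then no type-2 buyer would ever pick it, the problem reduces to $\max_{(x,p)\in T_2}U(x,p)=U(max_2)$, and this coincides with the claimed answer once $(x_1,p_1)$ is pushed onto the boundary of $T_2$.

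First I would show the type-2 constraint is tight at an optimum, i.e. $C_2(x_2,p_2)=C_2(x_1,p_1)$. If instead $C_2(x_2,p_2)<C_2(x_1,p_1)$, raise $p_2$ with $x_2$ held fixed until equality holds. By Lemma~\ref{Lemma:strict_increasing} (equivalently, $C_i(x_2,p_2)=y+x_2p_2$ with $y$ a function of $x_2$ only), this strictly increases $C_2(x_2,p_2)$, strictly increases $U(x_2,p_2)=x_2(p_2-c)$, and strictly increases $C_1(x_2,p_2)$ so type-1's constraint $C_1(x_2,p_2)\ge C_1(x_1,p_1)$ is preserved; the terminal point has cost $C_2(x_1,p_1)\le C_2(0,0)$, so it remains in $T_2$. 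Hence at the optimum $(x_2,p_2)$ lies on type-2's equal-cost line $E^{(2)}_{x_1,p_1}$ through $(x_1,p_1)$, and that line lies entirely in $T_2$ (its cost is at most $C_2(0,0)$).

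Next I would locate the utility-maximizing point on $E^{(2)}_{x_1,p_1}$. Using Theorem~\ref{thm:single_buyer} for type 2, this line has its kink at $x=x_2^*$; substituting each branch into $U\bigl(x,P_2(x_1,p_1,x)\bigr)=x\bigl(P_2(x_1,p_1,x)-c\bigr)$ and using $b_2=p_2^*>c$, one sees (exactly as in the proof that the knee is the single-buyer optimum) that $U$ has slope $b_2-c>0$ for $x<x_2^*$ and slope $-c<0$ for $x>x_2^*$, in every case of where $x_1$ falls. So the unconstrained maximizer on $E^{(2)}_{x_1,p_1}$ is $(x_2^*,P_2(x_1,p_1,x_2^*))$, and it only remains to check that this point does not violate type-1's constraint $C_1\bigl(x_2^*,P_2(x_1,p_1,x_2^*)\bigr)\ge C_1(x_1,p_1)$.

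This feasibility check is the main obstacle, and it is where the hypotheses $b_1\le b_2$ and $x_1^*\le x_2^*$ enter. The idea is to compare type-1's and type-2's equal-cost lines through the common point $(x_1,p_1)$ and show type-2's line lies weakly above type-1's for every $x\ge x_1$. The cleanest instance is when both lines are in their ``$x\le x_i^*$'' branch, where the price equals $b_j-\frac{x_1}{x}(b_j-p_1)$; for $x\ge x_1$ this is nondecreasing in $b_j$, so $b_2\ge b_1$ puts type-2's line on top. A routine case split on the positions of $x_1$ and $x_2^*$ relative to $x_1^*\,(\le x_2^*)$ extends this over all $x$ between $x_1$ and $x_2^*$: once $x$ exceeds $x_1^*$ the type-1 line is in a decreasing branch while the type-2 line is still increasing (it reaches its kink only at $x_2^*$), so the gap cannot close. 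Consequently $(x_2^*,P_2(x_1,p_1,x_2^*))$, sitting on type-2's line at abscissa $x_2^*\ge x_1$, lies on or above type-1's equal-cost line through $(x_1,p_1)$, so $C_1\bigl(x_2^*,P_2(x_1,p_1,x_2^*)\bigr)\ge C_1(x_1,p_1)$; the point is feasible and is therefore the optimal type-2 contract. The symmetric situation $x_1>x_2^*$ (moving left along $E^{(2)}_{x_1,p_1}$ to $x_2^*$) is handled by the same comparison of the two lines for $x\le x_1$.
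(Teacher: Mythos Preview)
Your argument is correct and proceeds along the same two-step skeleton as the paper: first push $p_2$ up to type-2's equal-cost line $E^{(2)}_{x_1,p_1}$, then observe that along this line the seller's profit peaks at the knee $x_2=x_2^*$. The paper's own proof is actually much terser: it simply declares that ``the feasible region for the contract of type-2 buyer is the area below $P_2(x_1,p_1,x)$'' and that profit is increasing in $x$ and $p$, hence the knee is optimal. In particular, the paper's proof of this lemma does \emph{not} check the type-1 incentive constraint $C_1(x_2,p_2)\ge C_1(x_1,p_1)$ at all; that is left implicit and is only handled in generality later, in Theorems~\ref{thm:x<x'} and~\ref{thm:x>x'} (the comparison of equal-cost lines under the monotonicity condition).

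Your step~3, comparing the two equal-cost lines through $(x_1,p_1)$ and showing $P_2(x_1,p_1,x)\ge P_1(x_1,p_1,x)$ for $x\ge x_1$, is precisely the $K=2$ instance of the paper's Theorem~\ref{thm:x>x'}, and your case analysis mirrors the one in the appendix. So relative to the paper you have not taken a different route; you have filled in a feasibility check that the paper defers, using exactly the tool the paper later develops. This makes your proof self-contained at the cost of anticipating a result the paper proves separately.
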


\begin{proof}
Given a contract $(x_1,p_1)$, the feasible region for the contract of type-$2$ buyer is the area below $P_2(x_1,p_1,x)$ as defined in Theorem \ref{Thm:equal_cost} (see Figure \ref{fig:equal_utility}).  Since the seller's profit is increasing in both $p$ and $x$, the contract that generates the highest profit is at $x_2=x_2^*$ and $p_2=P_2(x_1,p_1,x_2^*)$. 
\end{proof}

\begin{lemma}
When $K=2$, if $max_1 \in T_2$ and $x_1^*\leq x_2^*$, an optimal contract for type-1 must be $p_1=b_1$ and $x_1\leq x_1^*$.\label{Lemma:p=b}
\end{lemma}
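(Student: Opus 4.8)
The plan is to use Lemma~\ref{Lemma:x_2-p_2} to collapse the seller's problem to a one-dimensional optimization over the type-1 contract, and then prove two monotonicity statements. By Lemma~\ref{Lemma:x_2-p_2}, once a type-1 contract $(x_1,p_1)$ is fixed, the best type-2 contract is $(x_2^*,P_2(x_1,p_1,x_2^*))$, so the seller's expected profit is
\begin{equation*}
E[U]=r_1 x_1(p_1-c)+r_2 x_2^*\bigl(P_2(x_1,p_1,x_2^*)-c\bigr),
\end{equation*}
a function of $(x_1,p_1)$ alone. The relevant feasibility constraints are $(x_1,p_1)\in T_1$, the type-2 point lying in $T_2$, and the type-1 incentive constraint $C_1(x_1,p_1)\le C_1\bigl(x_2^*,P_2(x_1,p_1,x_2^*)\bigr)$ (the type-2 incentive constraint holds with equality by construction, up to the usual tie-breaking). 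It then suffices to show: (i) for fixed $x_1>0$, $E[U]$ is increasing in $p_1$ and raising $p_1$ up to the $T_1$-boundary preserves feasibility; and (ii) no optimal pair has $x_1>x_1^*$. Together these force $(x_1,p_1)$ to sit on the flat top of $T_1$'s boundary, i.e. $p_1=b_1$ and $x_1\le x_1^*$.

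For (i): the first term is increasing in $p_1$ since $x_1>0$, and the second is increasing because $P_2(x_1,p_1,x_2^*)$ is strictly increasing in $p_1$ by Lemma~\ref{Lemma:strict_increasing} (the case $x_1=0$ is a degenerate single-contract situation, handled separately). For feasibility, $(x_1,p_1)\in T_1$ just caps $p_1$ at the boundary value, and $(x_2^*,P_2(x_1,p_1,x_2^*))\in T_2$ follows from the equal-cost formula of Theorem~\ref{Thm:equal_cost} together with $p_1\le b_1\le b_2$, which gives $P_2(x_1,p_1,x_2^*)\le b_2$. The only substantive check is that the type-1 incentive constraint holds for every admissible $p_1$: I would show that the point $(x_2^*,P_2(x_1,p_1,x_2^*))$ lies on or above type-1's equal-cost line through $(x_1,p_1)$ — hence is at least as costly to type 1 — by plugging into the explicit equal-cost price formulas of Theorem~\ref{Thm:equal_cost} and using $b_1\le b_2$ and $x_1^*\le x_2^*$. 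For (ii): if $x_1>x_1^*$ then $(x_1,p_1)$ is on the curved part of $T_1$'s boundary, where $C_1(x_1,p_1)=C_1(0,0)=q_1-\epsilon_1$ and $x_1 p_1=x_1^* b_1$. Replace $(x_1,p_1)$ by $max_1=(x_1^*,b_1)$; type 1 is indifferent, and the induced type-2 contract becomes $(x_2^*,P_2(x_1^*,b_1,x_2^*))$ with $P_2(x_1^*,b_1,x_2^*)\ge P_2(x_1,p_1,x_2^*)$ because type 2 (weakly) prefers the higher-bandwidth point $(x_1,p_1)$ at equal stochastic payment — a larger $x$ can only relax its loss constraint — so its equal-cost line through $max_1$ is at least as high, again verifiable from the formulas in Theorem~\ref{Thm:equal_cost}. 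Thus the type-2 profit does not fall while the type-1 profit strictly rises, since $x_1^*(b_1-c)=x_1^*b_1-x_1^*c>x_1 p_1-x_1 c=x_1(p_1-c)$. Hence $x_1>x_1^*$ is never optimal, and combining with (i) yields $p_1=b_1$, $x_1\le x_1^*$.

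I expect the main obstacle to be the two algebraic comparisons just mentioned: verifying the type-1 incentive constraint in step (i) and the inequality $P_2(x_1^*,b_1,x_2^*)\ge P_2(x_1,p_1,x_2^*)$ in step (ii). These are precisely where the ordering hypotheses $b_1\le b_2$ and $x_1^*\le x_2^*$ enter, and a fully careful argument may have to separate the two channel regimes $q_i(1-b_i)\le\epsilon_i$ and $q_i(1-b_i)>\epsilon_i$; however, since the equal-cost-line formulas of Theorem~\ref{Thm:equal_cost} have the same form in both regimes, working directly with those formulas (rather than the cost functions) should eliminate most of the case-splitting and keep the computations short.
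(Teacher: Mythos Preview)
Your proposal is correct and follows essentially the same two-step route as the paper: first argue that $E[U]$ is strictly increasing in $p_1$ (via Lemma~\ref{Lemma:strict_increasing} applied to $P_2$ and the obvious monotonicity of $U$), forcing $(x_1,p_1)$ onto the boundary of $T_1$; then rule out the curved part $x_1>x_1^*$ by comparing with $(x_1^*,b_1)$. The paper's own proof is a terse sketch of exactly this argument and in fact omits the feasibility checks (type-1 IC, and $(x_2^*,P_2)\in T_2$) that you correctly flag as the places where $b_1\le b_2$ and $x_1^*\le x_2^*$ must be used; your plan to verify those via the explicit equal-cost formulas of Theorem~\ref{Thm:equal_cost} is sound and indeed anticipates the content of Theorems~\ref{thm:x<x'}--\ref{thm:x>x'} proved later for general $K$.
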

This lemma suggests that the optimal contract for type-1 lies on the straight line portion of its boundary; this can be shown using proof by contradiction.
Using Lemmas \ref{Lemma:x_2-p_2}, \ref{Lemma:p=b} and Theorem \ref{Thm:equal_cost}, the expected profit can be expressed as follows.
\begin{proof}
Lemma \ref{Lemma:p=b} can be proved in two steps. First we assume the optimal contract has $(x_1,p_1)\in T_1$, where we can increase $p_1$ by some positive $\delta>0$ but still have $(x_1,p_1+\delta)\in T_1$. By noticing that both $U(x,p)$ and $P(x,p,x')$ are increasing in $p$. We know that both $U(x_1,p_1+\delta)$ and $U(x_2^*,P_2(x_1,p_1+\delta,x_2^*)))$ are strictly larger than $U(x_1,p_1)$ and $U(x_2^*,P_2(x_1,p_1,x_2^*)))$. This contradicts the assumption that it was optimal before, thus, we know that the optimal contract for $(x_1,p_1)$ must be on the two lines (the upper boundary of $T_1$) defined in Theorem \ref{thm:acceptance_region}. Then we can exclude the possibility of having $(x_1,p_1)$ on the boundary of $T_1$ with $x_1>x_1^*$ by comparing the contract $(x_1^*, b_1)$ with such a contract. 
\end{proof}
\begin{figure}
\vspace{-20pt}
  \centering
    \includegraphics[width=0.4\textwidth]{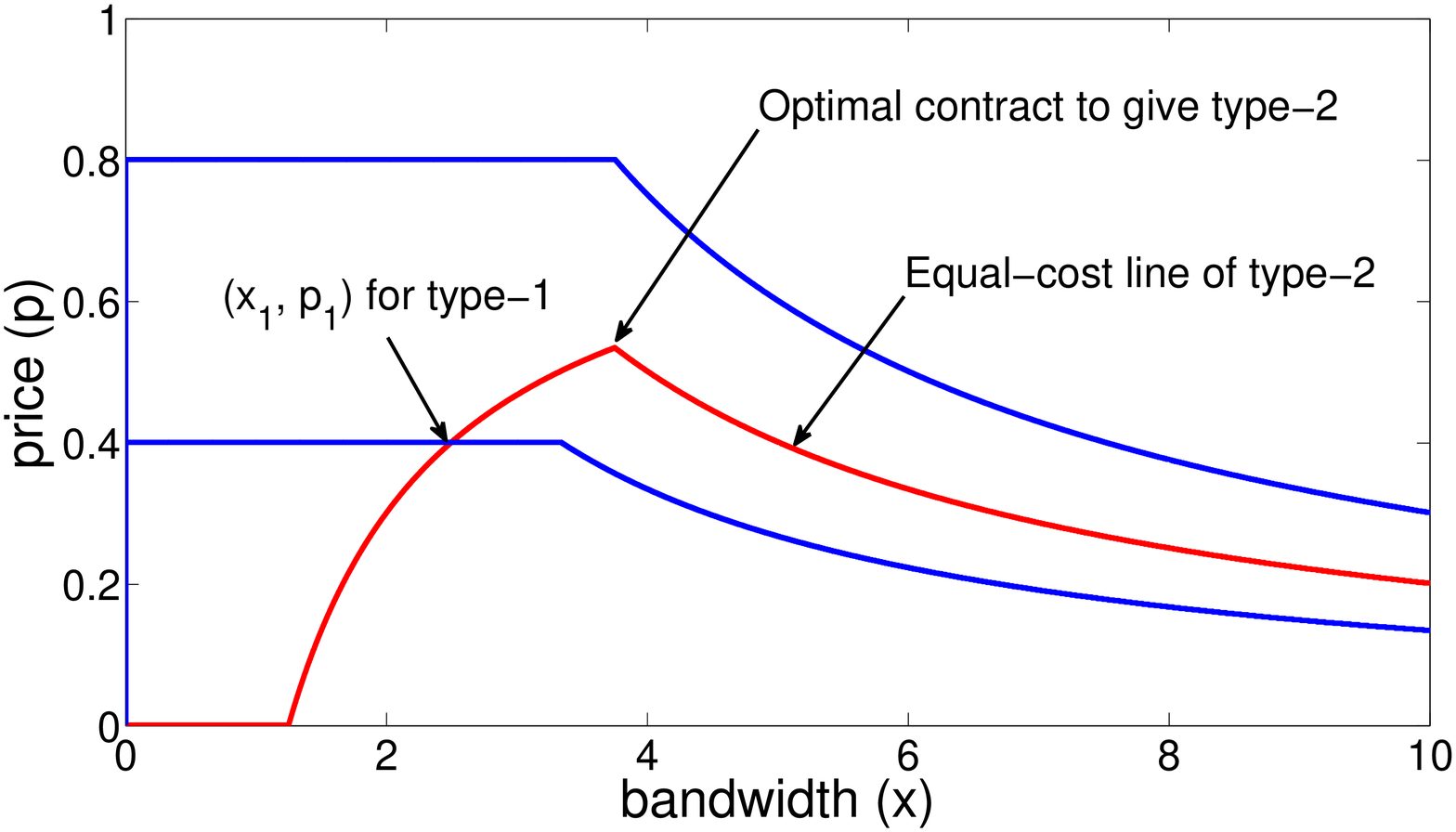}
  \caption{The regions to distinguish type-2 given $(x_1,p_1)$}
\label{fig:equal_utility}
\vspace{-20pt}
\end{figure}
\begin{eqnarray*}
E[U(\mathbb{C})] 
&=&r_1U(x_1,p_1)+r_2U(x_2,P_2(x_1,p_1,x_2^*))\\
&=&r_1U(x_1,b_1)+r_2U(x_2^*,b_2-\frac{x_1}{x_2^*}(b_2-b_1))\\
&=&r_1x_1(b_1-c)+r_2x_2^*(b_2-\frac{x_1}{x_2^*}(b_2-b_1)-c)\\
\frac{\partial E[U(\mathbb{C})]}{\partial x_1}&=&r_1(b_1-c)-r_2(b_2-b_1)
\end{eqnarray*}
The $x_1$ acheiving the optimal contract $\mathbb{C}$ is given by,
\begin{align*}
x_1&=
\begin{cases}
0 & \text{if~} r_1(b_1-c)-r_2(b_2-b_1)<0\\
x_1^* & \text{if~} r_1(b_1-c)-r_2(b_2-b_1)>0
\end{cases}\\
\mathbb{C}&=
\begin{cases}
\{max_2\} ~~~~~~~~ \text{if~} r_1(b_1-c)-r_2(b_2-b_1)<0\\
\{max_1,(x_2^*,b_2-\frac{x_1^*}{x_2^*}(b_2-b_1)) \} ~~~~~~~~~~~~~ \text{o.w.}
\end{cases}
\end{align*}

This result shows two operating regimes: 1) When $\frac{r_1}{r_2}<\frac{b_2-b_1}{b_1-c}$, type-$2$ is more profitable and the seller will distribute $max_2$. In this case there is no way to distribute another contract for type-$1$ without affecting the behavior of type-$2$. Consequently, the seller only distributes one contract. 2) When $\frac{r_1}{r_2}>\frac{b_2-b_1}{b_1-c}$,  type-$1$ is more profitable and the seller will distribute $max_1$. After choosing $max_1$, the seller can also choose $(x_2^*,b_2-\frac{x_1^*}{x_2^*}(b_2-b_1))$ for the type-$2$ buyer without affecting the type-$1$ buyer's choice. As a result, the seller distributes a pair of contracts to get the most profit.  The optimal contract for $x_1^*>x_2^*$ can be determined with a similar argument.

Again, we can prove that the optimal contract must have $p_1=b_1$ and $x_1\leq x_1^*$. The difference is that when $x_1^*>x_2^*$, the expression for $(x_2^*, P_2(x_1,p_1,x_2^*))$ has two cases depending on whether $x_1>x_2^*$ or $x_1\leq x_2^*$. 
\begin{align*}
&E[U(\mathbb{C})]
=\\
&~~\begin{cases}
r_1U(x_1,b_1)+r_2U(x_2^*,b_2-\frac{x_1}{x_2^*}(b_2-b_1)) & \text{if~} x_1\leq x_2^*\\
r_1U(x_1,b_1)+r_2U(x_2^*,\frac{x_1b_1}{x_2^*}) & \text{if~} x_1> x_2^*
\end{cases}\\
&\frac{\partial E[U(\mathbb{C})]}{\partial x_1}=
\begin{cases}
r_1(b_1-c)-r_2(b_2-b_1) & \text{if~} x_1\leq x_2^*\\
r_1(b_1-c)+r_2b_1 & \text{if~} x_1> x_2^*
\end{cases}
\end{align*}

To summarize, when $r_1(b_1-c)-r_2(b_2-b_1)>0$, $E[R(\mathbb{C})]$ is strictly increasing in $x_1$ and we know that $x_1=x_1^*$ maximizes the expected profit. When $r_1(b_1-c)-r_2(b_2-b_1)<0$, $E[R(\mathbb{C})]$ is decreasing in $x_1$ if $x_1\in[0,x_2^*]$ and increasing in $x_1$ if $x_1\in [x_2^*,x_1^*]$. We can only conclude that either $x_1=0$ or $x_1=x_1^*$ maximizes the expected profit.

\begin{align*}
&x_1=
\begin{cases}
0 \text{~or~} x_1^* & \text{if~} r_1(b_1-c)-r_2(b_2-b_1)<0\\
x_1^* & \text{if~} r_1(b_1-c)-r_2(b_2-b_1)>0
\end{cases}\\
&\mathbb{C}=\\
&\begin{cases}
max_2 \text{/} \{max_1, (x_2^*,\frac{x_1^*b_1}{x_2^*})\} & \text{if~} r_1(b_1-c)-r_2(b_2-b_1)<0\\
\{max_1, (x_2^*,\frac{x_1^*b_1}{x_2^*})\} & \text{if~} r_1(b_1-c)-r_2(b_2-b_1)>0
\end{cases}
\end{align*}

In the first condition, we can calculate the expected profit of the two contract sets and pick the one with the higher profit.

\subsection{$K$ buyer types, $K> 2$}\label{section:alg}

The previous section gives the explicit solution to the contract design problem when $K=2$. 
When $K> 2$ we no longer have explicit solutions; even numerically searching for the optimal contract set becomes very complicated.  For instance, even if we assume that both $x$ and $p$ are from discrete sets, with $X$ and $P$ possible values, respectively, the search must be done over the space of all possible sets of $K$ different contracts, on the order of $(XP)^K$. 
In general $X$ and $P$ both take on real values, making the search space uncountable.
In order to reduce the complexity we will need to exploit special properties of the problem.
%
%
%
We first reindex the buyer types such that $b_1\leq ...\leq b_K$, under certain conditions we will determine a procedure which finds the optimal contract.


\begin{definition}
The buyer types are said to satisfy a monotonicity condition (MC), if $\forall i,j$, $b_i\leq b_j$ implies $x_i^*\leq x_j^*$. (Remember we will reindexed the types so that $b_1\leq ...\leq b_K$ and thus $x_1^*\leq ... \leq x_K^*$)
\end{definition}

This monotonicity condition (MC) says that the amount a buyer willing to buy is strictly increasing in the quality it gets from buying the secondary spectrum. This condition leads to  special properties which allows us to construct simpler ways to find the optimal contracts. 

\begin{theorem}
When the MC is satisfied, $b_i \leq b_j$ and $x<x'$ implies $P_i(x',p',x)\geq P_j(x',p',x)$.\label{thm:x<x'}
\end{theorem}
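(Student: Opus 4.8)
The plan is to reduce the statement to the explicit four-branch formula of Theorem~\ref{Thm:equal_cost} and to verify it by a short case analysis. The MC enters only at the very start: from $b_i\le b_j$ it yields $x_i^*\le x_j^*$, so together with the hypothesis $x<x'$ the four positive numbers $x,x',x_i^*,x_j^*$ can be interleaved in only a handful of ways. Since the function $P(x',p',\cdot)$ built in Theorem~\ref{Thm:equal_cost} is continuous where its branches meet, it is enough to settle the ``open'' configurations. I would split first on the location of $x'$ --- (I) $x'\le x_i^*\le x_j^*$, (II) $x_i^*<x'\le x_j^*$, (III) $x_i^*\le x_j^*<x'$ --- and then, inside each, on the location of $x$. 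Note that the third branch of the formula, which requires $x'<x$, never arises because $x<x'$; this is one of several places the hypothesis is used tacitly to discard impossible branch selections.

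In case~(I) both $P_i$ and $P_j$ use the first branch, and subtracting gives $P_i(x',p',x)-P_j(x',p',x)=(b_i-b_j)\bigl(1-\tfrac{x'}{x}\bigr)\ge 0$, since $b_i\le b_j$ and $x<x'$. In case~(III): if $x\ge x_j^*$ both prices equal $x'p'/x$ and equality holds; if $x_i^*\le x<x_j^*$ the second branch applies to type~$i$ and the fourth to type~$j$, and after clearing the common denominator $x$ the inequality collapses to $x_j^*/x\ge 1$; if $x<x_i^*$ both use the fourth branch and the claim reduces to $b_i(x-x_i^*)\ge b_j(x-x_j^*)$, which follows from $x-x_i^*,\,x-x_j^*\le 0$, $x_i^*\le x_j^*$ and $0\le b_i\le b_j$ by two sign-preserving multiplications. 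Case~(II) follows the same pattern: for $x\ge x_i^*$ the branch pair is (second for type~$i$, first for type~$j$) and the inequality becomes $x'/x\ge 1$; for $x<x_i^*$ it is (fourth, first) and it reduces to $b_i(x-x_i^*)\ge b_j(x-x')$, handled exactly as above using $x_i^*\le x'$.

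The only real difficulty I anticipate is bookkeeping: confirming that the case split is exhaustive and that the correct branch of Theorem~\ref{Thm:equal_cost} is attached to each type in every configuration (this is where $x_i^*\le x_j^*$ and $x<x'$ silently rule out impossible interleavings). Each individual verification is elementary --- never more than canceling $x>0$ and invoking one of $x<x'$, $b_i\le b_j$, $x_i^*\le x_j^*$ --- and, notably, it uses neither the sign of $b_i-p'$ nor whether $(x',p')$ lies on the linear or the hyperbolic portion of an acceptance boundary, so no extra assumptions on the reference contract are needed.
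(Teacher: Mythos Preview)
Your argument is correct and follows essentially the same route as the paper's proof: both split into the three cases $x'\le x_i^*\le x_j^*$, $x_i^*< x'\le x_j^*$, $x_i^*\le x_j^*< x'$ and then sub-split on the location of $x$. The only cosmetic difference is that the paper compares the derivatives $\partial P_i/\partial x$ and $\partial P_j/\partial x$ (using $P_i(x',p',x')=P_j(x',p',x')=p'$ as the common anchor) while you subtract the two branch formulas of Theorem~\ref{Thm:equal_cost} directly; after clearing the common $1/x$ these amount to the same inequality in each sub-case.
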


\begin{theorem}
When the MC is satisfied, $b_i \leq b_j$ and $x>x'$ implies $P_i(x',p',x)\leq P_j(x',p',x)$.\label{thm:x>x'}
\end{theorem}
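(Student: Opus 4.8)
\emph{Proof idea.} The plan is to obtain Theorems \ref{thm:x<x'} and \ref{thm:x>x'} from a single closed-form identity for the equivalent price, and then simply read off the inequality. The first step is to collapse the four branches of Theorem \ref{Thm:equal_cost} into one expression. Recall that the cost to a type-$i$ buyer of accepting $(x,p)$ is $C_i(x,p)=y_i(x)+xp$, where $y_i(x)$ is the optimal supplementary purchase of deterministic service. A short computation, carried out separately in the two regimes $q_i(1-b_i)\le\epsilon_i$ and $q_i(1-b_i)>\epsilon_i$ exactly as in the discussion following Theorem \ref{thm:acceptance_region}, shows that in both cases $y_i(x)=A_i+b_i\,(x_i^*-x)^+$ for a constant $A_i$ depending only on the type, with $x_i^*$ the knee of that type's acceptance region. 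Since $\bigl(x,\,P_i(x',p',x)\bigr)$ lies on the equal-cost line through $(x',p')$, we have $y_i(x)+x\,P_i(x',p',x)=y_i(x')+x'p'$; the constant $A_i$ cancels, leaving
\[
x\,P_i(x',p',x)=x'p'+b_i\,h_i,\qquad h_i:=(x_i^*-x')^+-(x_i^*-x)^+ .
\]
Equivalently, one can verify this identity directly against each of the four branches of Theorem \ref{Thm:equal_cost}, which correspond to the four sign patterns of $x_i^*-x'$ and $x_i^*-x$.

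The second step is purely about $h_i$. When $x>x'$ the branch $x<x_i^*<x'$ cannot occur, and splitting on where $x_i^*$ falls relative to $x'$ and $x$ gives $h_i=\bigl(\min(x_i^*,x)-x'\bigr)^+\ge 0$, a quantity that is nondecreasing in $x_i^*$. (Symmetrically, for $x<x'$ one gets $h_i=-\bigl(\min(x_i^*,x')-x\bigr)^+\le 0$, nonincreasing in $x_i^*$, which is what powers Theorem \ref{thm:x<x'}.) Now invoke the monotonicity condition: $b_i\le b_j$ forces $x_i^*\le x_j^*$, hence $0\le h_i\le h_j$. Writing $h_j,h_i$ for the two buyers,
\[
x\bigl(P_j(x',p',x)-P_i(x',p',x)\bigr)=b_j h_j-b_i h_i\ge b_i h_j-b_i h_i=b_i(h_j-h_i)\ge 0,
\]
where the first inequality uses $b_j\ge b_i\ge 0$ and $h_j\ge 0$, and the second uses $h_j\ge h_i\ge 0$. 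Dividing by $x>0$ yields $P_i(x',p',x)\le P_j(x',p',x)$, as claimed.

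The only genuine labor is in the first step: verifying the unified form $y_i(x)=A_i+b_i(x_i^*-x)^+$ in both regimes and confirming that the $x_i^*$ it produces is precisely the knee used by Theorem \ref{Thm:equal_cost}. Once the identity is in hand the inequality is a two-line calculation, and the monotonicity condition is used exactly once, to turn $b_i\le b_j$ into $x_i^*\le x_j^*$. A less structured alternative would be a direct case analysis over the orderings of $\{x',x\}$ against $\{x_i^*,x_j^*\}$, comparing the relevant branch of $P_i$ with that of $P_j$ in each case; this works but is more tedious and obscures why the statement is true.
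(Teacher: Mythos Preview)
Your proof is correct and takes a genuinely different route from the paper. The paper's Appendix proceeds by a direct case analysis on the position of $x'$ relative to $x_i^*$ and $x_j^*$ (three cases), in each case writing out the relevant branches of $P_i$ and $P_j$ from Theorem~\ref{Thm:equal_cost}, differentiating with respect to $x$, and comparing slopes to establish the inequality on each subinterval. You instead collapse the four branches into the single identity $x\,P_i(x',p',x)=x'p'+b_i h_i$ with $h_i=(x_i^*-x')^+-(x_i^*-x)^+$, observe that for $x>x'$ this reduces to $h_i=(\min(x_i^*,x)-x')^+$, which is nonnegative and nondecreasing in $x_i^*$, and then bound $b_jh_j-b_ih_i$ in two lines. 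Your approach is cleaner and makes the role of the monotonicity condition transparent: it is invoked exactly once, to convert $b_i\le b_j$ into $x_i^*\le x_j^*$. The paper's approach is more hands-on but requires no preliminary identity. As you note, your derivation also yields Theorem~\ref{thm:x<x'} at no extra cost via the symmetric expression $h_i=-(\min(x_i^*,x')-x)^+$ for $x<x'$, whereas the paper handles the two theorems by separate case analyses.
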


\begin{proof}
The proofs for Theorem \ref{thm:x<x'} and Theorem \ref{thm:x>x'} are moved to the Appendix.
\end{proof}

\begin{lemma}
When the MC is satisfied, the optimal contract such that type i buyer picks $(x_i,p_i)$ for all i must have $x_1\leq ... \leq x_K$.
\end{lemma}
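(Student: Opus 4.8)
The plan is to prove the lemma by an exchange argument: starting from any profit-maximizing incentive-compatible contract set, I will show that whenever the assigned amounts of two types are ``out of order'' those two contracts can be swapped without hurting the seller, and that iterating such swaps produces a monotone assignment. Throughout, reindex the types so that $b_1\le\cdots\le b_K$, so that the MC gives $x_1^*\le\cdots\le x_K^*$. Fix an optimal contract set in which type $i$ selects $(x_i,p_i)$, and suppose it is not monotone, i.e. $x_i\le x_{i+1}$ fails for some $i$; then there is an \emph{adjacent} pair with $x_i>x_{i+1}$, since if every adjacent pair were ordered the whole sequence would be.

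The key step is to show that such an inversion forces both types to be indifferent between the two contracts. Incentive compatibility gives $C_i(x_i,p_i)\le C_i(x_{i+1},p_{i+1})$ and $C_{i+1}(x_{i+1},p_{i+1})\le C_{i+1}(x_i,p_i)$. Rewriting the right-hand sides through the equal-cost lines (Theorem~\ref{Thm:equal_cost}) and using that $C_i(x_i,\cdot)$ and $C_{i+1}(x_i,\cdot)$ are strictly increasing (Lemma~\ref{Lemma:strict_increasing}; note $x_i>x_{i+1}\ge0$), these become
\[
P_{i+1}(x_{i+1},p_{i+1},x_i)\ \le\ p_i\ \le\ P_i(x_{i+1},p_{i+1},x_i).
\]
On the other hand, since $b_i\le b_{i+1}$, the MC holds, and $x_i>x_{i+1}$, Theorem~\ref{thm:x>x'} (applied with anchor point $(x_{i+1},p_{i+1})$ and evaluation amount $x_i$) yields $P_i(x_{i+1},p_{i+1},x_i)\le P_{i+1}(x_{i+1},p_{i+1},x_i)$. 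The two displays sandwich everything into equalities, so $p_i=P_i(x_{i+1},p_{i+1},x_i)=P_{i+1}(x_{i+1},p_{i+1},x_i)$; hence $C_i(x_i,p_i)=C_i(x_{i+1},p_{i+1})$ and $C_{i+1}(x_i,p_i)=C_{i+1}(x_{i+1},p_{i+1})$, i.e. both type $i$ and type $i+1$ are indifferent between $(x_i,p_i)$ and $(x_{i+1},p_{i+1})$.

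Given this indifference, reassigning type $i$ to the contract $(x_{i+1},p_{i+1})$ with the menu unchanged preserves all incentive constraints, since $C_i(x_{i+1},p_{i+1})=C_i(x_i,p_i)$ is still no larger than the cost to type $i$ of every other menu item; the same holds for reassigning type $i+1$ to $(x_i,p_i)$. Comparing $E[U(\mathbb{C})]$ before and after these reassignments forces $U(x_i,p_i)=U(x_{i+1},p_{i+1})$: if $U(x_{i+1},p_{i+1})>U(x_i,p_i)$ the first reassignment strictly raises the seller's expected profit, contradicting optimality, and the reverse inequality is excluded symmetrically. Therefore swapping the assignments of types $i$ and $i+1$ leaves $E[U(\mathbb{C})]$ unchanged and keeps the set incentive compatible, producing another optimal contract set with one fewer inverted pair. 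Each such swap strictly decreases the finite number of inversions, so after finitely many swaps we obtain an optimal incentive-compatible contract set with $x_1\le\cdots\le x_K$, as claimed.

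The main obstacle is the second step: one must line up the incentive-compatibility inequalities with the single-crossing inequalities of Theorems~\ref{thm:x<x'}--\ref{thm:x>x'} so that the sandwich closes, which requires care about which contract plays the role of the anchor $(x',p')$ and which of the two amounts is the larger one. The degenerate situations this produces ($b_i=b_{i+1}$, or $x_i^*=x_{i+1}^*$, or $x_{i+1}=0$) need no separate treatment—the indifference argument and the exchange go through verbatim—but they are the reason the statement is phrased as ``the optimal contract must have $x_1\le\cdots\le x_K$'' in the sense that every optimal menu admits the monotone assignment constructed above.
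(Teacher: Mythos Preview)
Your proof is correct and rests on the same key ingredient as the paper's, namely the single-crossing inequality of Theorem~\ref{thm:x>x'}. The paper's own argument is considerably terser: it fixes $b_j<b_i$ with $x_j>x_i$, observes via Theorem~\ref{thm:x>x'} that IC for type $j$ forces type $i$ to (weakly) prefer $(x_j,p_j)$ as well, and concludes directly that $x_j\le x_i$ must hold in any optimal assignment---effectively treating the inversion as an outright IC violation. Your version is more careful about the weak-versus-strict issue: you show that an adjacent inversion forces \emph{mutual indifference}, then invoke optimality twice (once for each single reassignment) to pin down $U(x_i,p_i)=U(x_{i+1},p_{i+1})$, and iterate adjacent swaps to reach a monotone assignment with the same profit. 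This extra bookkeeping is exactly what the paper's sketch leaves implicit; the underlying mechanism is the same.
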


\begin{proof}
Let $(x_i, p_i)$ denote the contract designed for the type i buyer. Consider now the contract for the type j buyer where $b_j<b_i$ and $x_j>x_i$. From Theorem \ref{thm:x>x'} we know that $P_j(x_i,p_i,x_j)\leq P_i(x_i,p_i,x_j)$ when the MC is satisfied. This implies that whatever $p_j$ we determined, if the type j buyer prefers $(x_j,p_j)$ over $(x_i,p_i)$ then the type i buyer must think the same way. From the IC constraint, the type j buyer has to prefer the $(x_j,p_j)$ over $(x_i,p_i)$. Thus, we must have $x_j\leq x_i$ in the optimal contract where each type of buyer selects its own designated contract.
\end{proof}

\begin{figure}
\vspace{-20pt}
  \centering
    \includegraphics[width=0.4\textwidth]{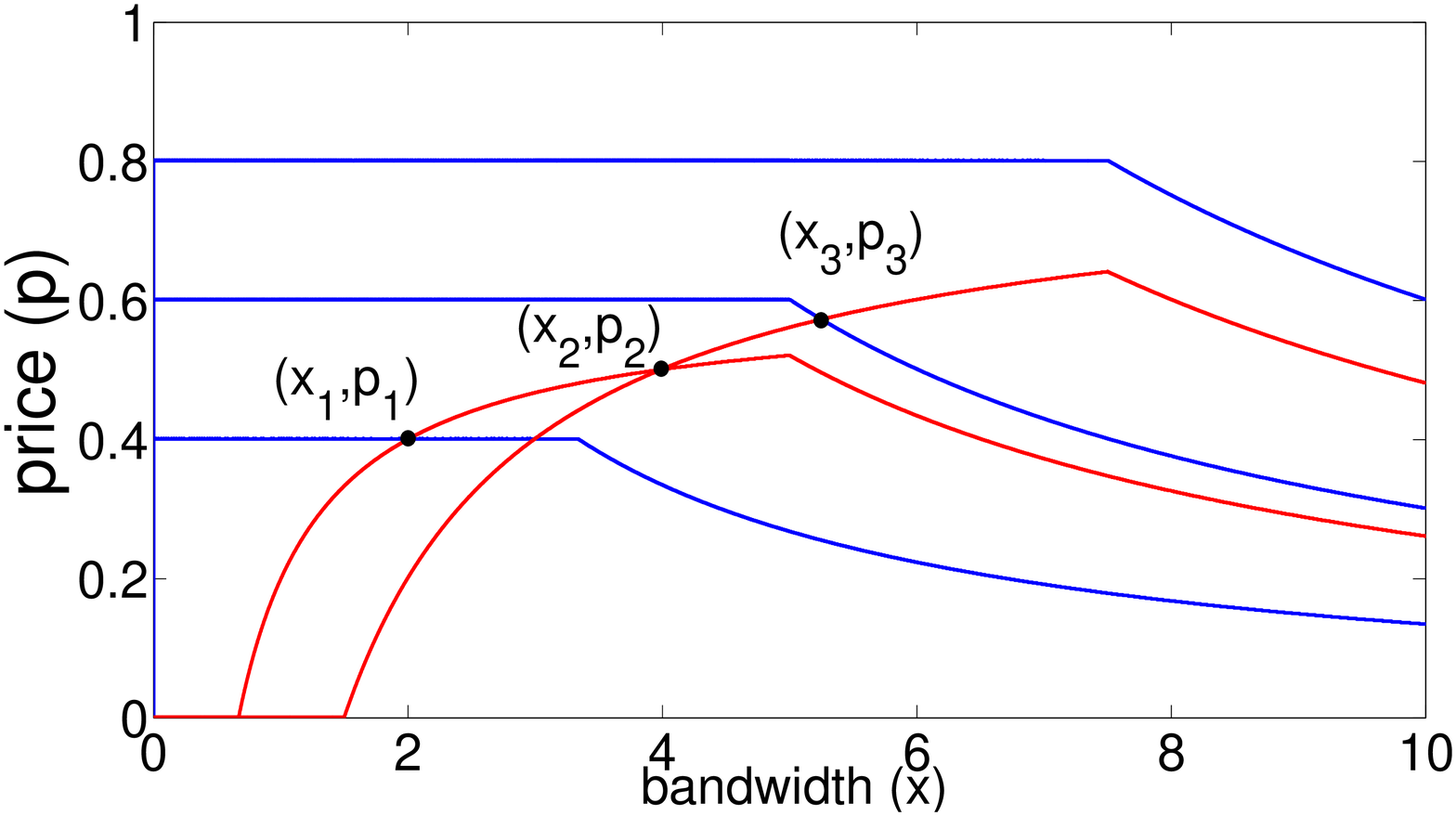}	
  \caption{Example of a possible optimal contract}
\label{fig:nested}
\vspace{-20pt}
\end{figure}
%
\begin{lemma}\label{lem:set_x_2}
When the MC is satisfied, the optimal contract must have $x_i\leq x_i^*$ $\forall i=1...K$.\label{Lemma:x_i}
\end{lemma}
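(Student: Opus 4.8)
The plan is to argue by contradiction: suppose an optimal contract set has $x_i > x_i^*$ for some type $i$, and show that replacing $(x_i,p_i)$ by a contract with $x_i^*$-coordinate equal to $x_i^*$ (and the appropriate price on type $i$'s boundary) weakly improves the seller's expected profit while not destroying the incentive-compatibility structure. First I would recall that by the preceding lemma the optimal contract has $x_1 \le x_2 \le \cdots \le x_K$, so if $x_i > x_i^*$ then all types $j \ge i$ have $x_j \ge x_i > x_i^* $; I will handle the largest index $i$ with $x_i>x_i^*$ first so that the types being modified can be treated together. The key observation is that for type $i$, the single-type optimal contract $max_i = (x_i^*,p_i^*)$ maximizes $U(x,p)$ over its acceptance region $T_i$, and moreover (from Theorem 2 and the equal-cost-line structure of Theorem \ref{Thm:equal_cost}) every acceptable contract for type $i$ with $x > x_i^*$ lies on or below the curve $p = x_i^*b_i/x$, hence has seller-profit $x(p-c) \le x_i^* b_i - xc < x_i^*(b_i - c) = U(max_i)$.

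The main steps, in order: (1) Take an optimal set $\mathbb{C}$ with type $i$ selecting $(x_i,p_i)$ and suppose $x_i > x_i^*$; pick $i$ maximal with this property. (2) Form $\mathbb{C}'$ by replacing $(x_i,p_i)$ with $(x_i^*, P_i(x_i,p_i,x_i^*))$ — i.e. slide the contract along type $i$'s equal-cost line back to $x_i^*$, raising the price so type $i$ is indifferent. Since type $i$ is indifferent between old and new, and since $x_i^* < x_i$ with the price correspondingly higher, the seller's profit from type $i$ is $U(x_i^*, P_i(x_i,p_i,x_i^*)) \ge U(x_i,p_i)$ — this uses that on an equal-cost line of a buyer, moving to smaller $x$ raises the seller's profit, exactly as in the proof of Theorem 2 (points on the curve to the right of the knee are strictly worse). (3) Check that no other type's choice is harmed: by Theorem \ref{thm:x>x'}, for a type $j$ with $b_j \le b_i$ and $x_j < x_i^* < x_i$... actually the relevant comparison is that lowering $x_i$ to $x_i^*$ can only make $(x_i^*, \cdot)$ \emph{less} attractive to lower types $j<i$ (by MC and Theorem \ref{thm:x<x'}, since we moved to smaller $x$ but higher $p$, and for $j<i$ the equal-cost price $P_j$ at the smaller $x$ is even higher, so if type $j$ didn't want $(x_i,p_i)$ it still doesn't want $(x_i^*,P_i(\cdots))$), and for higher types $j>i$ with $x_j \ge x_i$ the new contract at $x_i^* < x_j$ with higher price is again no more attractive than the old one was. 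So incentive compatibility is preserved and $\mathbb{C}'$ is at least as profitable. (4) Iterate downward over $i$ to conclude that the optimal set can be taken with $x_i \le x_i^*$ for all $i$.

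The hard part will be step (3): carefully verifying that the swap does not cause some other type $j$ to now \emph{prefer} the modified contract (which could reduce profit if $j$ switches away from its intended, more profitable contract) and conversely that the types $\ge i$ still find their own contracts at least weakly preferable. This requires invoking the monotonicity consequences (Theorems \ref{thm:x<x'} and \ref{thm:x>x'}) in the right direction for each of the two sign regimes ($x_j$ below vs. above $x_i^*$) and also using Lemma \ref{Lemma:strict_increasing} (monotonicity of $P$ in $p'$) to compare prices. A secondary subtlety is the boundary case where the modified contract coincides with $max_i$ or where several consecutive types share the same $x$; handling ties is routine given the paper's standing assumption that the seller can break indifferences in its favor.
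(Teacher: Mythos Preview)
Your overall strategy---assume some $x_i>x_i^*$, slide the type-$i$ contract along its equal-cost line back to $x_i^*$, and argue the seller does at least as well---is the same as the paper's. The concrete error is in step (3), for the \emph{lower} types $j<i$. You assert that the modified contract $(x_i^*,P_i(x_i,p_i,x_i^*))$ is ``less attractive'' to type $j$ because ``the equal-cost price $P_j$ at the smaller $x$ is even higher.'' The inequality you are quoting from Theorem~\ref{thm:x<x'} is indeed $P_j(x_i,p_i,x_i^*)\ge P_i(x_i,p_i,x_i^*)$ for $b_j\le b_i$ and $x_i^*<x_i$, but this says the new price $P_i(\cdots)$ lies \emph{below} type $j$'s equal-cost line through $(x_i,p_i)$, i.e.\ $C_j(x_i^*,P_i(\cdots))\le C_j(x_i,p_i)$. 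So the new contract is \emph{more} attractive to type $j$, not less, and you cannot conclude that type $j$ still prefers $(x_j,p_j)$. (Your argument for higher types $j>i$ is fine: there the inequality goes the right way.) Processing the largest violating index first does not rescue this, since it gives you no control over the $x_j$'s for $j<i$.

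The paper repairs exactly this point by two changes. First, it inducts \emph{upward} from $i=1$, so that when it reaches index $i$ every $j<i$ already satisfies $x_j\le x_j^*\le x_i^*$ and the ordering $x_1\le\cdots\le x_{i-1}\le x_i^*\le x_{i+1}\le\cdots$ is preserved by MC. Second, rather than freezing the contracts for $j>i$, it \emph{recomputes} all $p_j$ with $j\ge i$ via the sequential rule $p_j=P_j(x_{j-1},p_{j-1},x_j)$ of Theorem~\ref{thm:optimal}. That theorem then guarantees incentive compatibility of the entire new menu automatically, so no separate downward-IC check is needed; and combining Theorem~\ref{thm:x<x'} with Lemma~\ref{Lemma:strict_increasing} shows each recomputed $p_j$ (hence each $U(x_j,p_j)$) weakly \emph{increases}. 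Thus the paper gets a profit gain from every type $j\ge i$, not just from type $i$, while your version (had the IC issue not arisen) would gain only from type $i$.
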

\begin{proof}
Proof by contradiction. Consider some optimal contract having $x_i>x_i^*$, we show that replacing $x_i=x_i^*$ is actually better. By Theorem \ref{thm:optimal}, we know that $p_i=P_i(x_{i-1},p_{i-1},x_i)$ and by definition of $P_i$ it is better off to the seller by providing $x_i^*$ instead if we only consider the profit from the type $i$ buyer. Now, by Theorem \ref{thm:x<x'} $P_{i+1}(x_i,p_i,x_i^*)\leq P_i(x_i,p_i,x_i^*)$. Also, because $P_i(x',p',x)$ is a strictly increasing function in $p'$. The price $p_{i+1}$ is strictly higher for assigning $x_i^*$ instead of $x_i$. This results in every $p_j$ $j>i$ is strictly increased and the payoff change must be positive. The only question is whether we can assign $x_i^*$ without affecting the contracts $(x_j,p_j)$ $j<i$. The answer is if $\forall j<i$ $x_j\leq x_j^*$ we can do it. By mathematical induction, we can again prove that for all $i=1...K$ $x_i\leq x_i^*$. An example is illustrated in Figure \ref{fig:nested}.
\end{proof}

This result allows us to restrict our search for the optimal contract to the set where $x_i\leq x_i^*$. We can further simplify our search by expressing the values $p_i$, $\forall i=1...K$ as functions of $x_i$ $\forall i=1...K$, by the following theorem. 

\begin{theorem}
Given a set $x_1\leq ...\leq x_K$, define $(x_0,p_0)=(0,0)$ and find the contracts $(x_i,p_i)=(x_i, P_{i}(x_{i-1},p_{i-1}, x_{i}))$ in the order $i=1...K$. When the MC is satisfied this procedure produces a contract set that maximizes the seller's profit, where each type-$i$ buyer accepts $(x_i,p_i)$. \label{thm:optimal}
\end{theorem}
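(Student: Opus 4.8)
The plan is to split the claim into a \emph{feasibility} part (the set produced by the recursion is incentive compatible and individually rational, with type $i$ choosing $(x_i,p_i)$) and an \emph{optimality} part (no feasible price vector on the same bandwidths does better); the monotonicity condition is used only in the first part. Throughout, write $(x_0,p_0)=(0,0)$ and let $(x_i,p_i)=(x_i,P_i(x_{i-1},p_{i-1},x_i))$ be the contracts from the recursion. By construction $(x_i,p_i)$ lies on type $i$'s equal-cost line through $(x_{i-1},p_{i-1})$, so $C_i(x_i,p_i)=C_i(x_{i-1},p_{i-1})$: type $i$ is exactly indifferent between its designated contract and the previous one, a tie broken in the seller's favour by the convention already adopted. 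I will also use that a contract on or below a buyer's equal-cost line costs that buyer no more than a contract on the line (with strict inequality strictly below), and that $C_i(x,\cdot)$ is strictly increasing for $x>0$ (Lemma \ref{Lemma:strict_increasing}); equalities $x_k=x_{k-1}$ only produce duplicate contracts and may be ignored, and by Lemma \ref{Lemma:x_i} the relevant regime has $x_i\le x_i^*$.

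For feasibility, fix $i$. For each $k<i$ we have $b_k\le b_i$ (types are reindexed) and $x_k>x_{k-1}$, so Theorem \ref{thm:x>x'}, applied to the base point $(x_{k-1},p_{k-1})$, gives $p_k=P_k(x_{k-1},p_{k-1},x_k)\le P_i(x_{k-1},p_{k-1},x_k)$; hence $(x_k,p_k)$ is on or below type $i$'s equal-cost line through $(x_{k-1},p_{k-1})$, i.e. $C_i(x_k,p_k)\le C_i(x_{k-1},p_{k-1})$. Chaining from $k=1$ to $k=i-1$ and appending the indifference $C_i(x_{i-1},p_{i-1})=C_i(x_i,p_i)$ yields $C_i(0,0)\ge C_i(x_1,p_1)\ge\cdots\ge C_i(x_{i-1},p_{i-1})=C_i(x_i,p_i)$, which simultaneously gives individual rationality and shows type $i$ does not prefer any lower-indexed contract. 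For each $k>i$ we have $b_i\le b_k$ and $x_k>x_{k-1}$, so Theorem \ref{thm:x>x'} gives $P_i(x_{k-1},p_{k-1},x_k)\le p_k$, so $(x_k,p_k)$ is on or above type $i$'s equal-cost line through $(x_{k-1},p_{k-1})$, i.e. $C_i(x_k,p_k)\ge C_i(x_{k-1},p_{k-1})$; chaining from $k=i+1$ upward gives $C_i(x_i,p_i)\le C_i(x_{i+1},p_{i+1})\le\cdots\le C_i(x_K,p_K)$, so type $i$ does not prefer any higher-indexed contract either. Thus each type $i$ accepts $(x_i,p_i)$.

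For optimality, I would show that for any feasible set $\{(x_i,p_i')\}$ on the same bandwidths in which type $i$ accepts $(x_i,p_i')$, one has $p_i'\le p_i$ for all $i$, by induction on $i$. Base case: individual rationality of type $1$ gives $C_1(x_1,p_1')\le C_1(0,0)=C_1(x_1,p_1)$, and strict monotonicity of $C_1(x_1,\cdot)$ forces $p_1'\le p_1$ (if $x_1=0$ there is nothing to prove). Step: assuming $p_j'\le p_j$ for $j<i$, incentive compatibility (type $i$ prefers its contract to $(x_{i-1},p_{i-1}')$) gives $C_i(x_i,p_i')\le C_i(x_{i-1},p_{i-1}')\le C_i(x_{i-1},p_{i-1})=C_i(x_i,p_i)$, the middle step by monotonicity of $C_i(x_{i-1},\cdot)$ with $p_{i-1}'\le p_{i-1}$ and the last equality by construction; monotonicity of $C_i(x_i,\cdot)$ then gives $p_i'\le p_i$. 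Hence the seller's profit $\sum_i r_ix_i(p_i'-c)\le\sum_i r_ix_i(p_i-c)$, so the recursive prices are optimal for the given bandwidths. Combined with feasibility, the recursive set is an optimal contract set, and since an optimal contract set has $x_1\le\cdots\le x_K$ with $x_i\le x_i^*$ (and, after relabelling, has type $i$ accept the $i$-th contract), running this procedure over such bandwidth vectors recovers the global optimum.

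I expect the feasibility step to be the main obstacle: converting the pairwise price comparisons of Theorems \ref{thm:x<x'}--\ref{thm:x>x'} into the full family of incentive-compatibility inequalities via the two telescoping chains, and handling the endpoints cleanly — the degenerate "contract" $(x_0,p_0)=(0,0)$ (where Theorem \ref{thm:x>x'} is invoked with $x'=0$, so one may instead argue directly that $p_1\le b_1\le b_i$), ties $x_k=x_{k-1}$, and any $x_i=0$ where the recursion must be read as assigning the null contract. The optimality step is essentially bookkeeping once Lemma \ref{Lemma:strict_increasing} and the per-step indifference are in hand; the one point requiring care there is that the comparison is against sets with the same bandwidths and the same type-to-contract assignment, so the reduction to that case via the preceding monotonicity lemmas should be stated explicitly.
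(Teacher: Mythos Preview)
Your proposal is correct and follows essentially the same approach as the paper: the same decomposition into feasibility (each type selects its designated contract) and optimality (no higher feasible prices on the given bandwidths), driven by the same tools (Theorem~\ref{thm:x>x'} for the incentive-compatibility chain, and monotonicity of $C_i(x,\cdot)$/Lemma~\ref{Lemma:strict_increasing} for the price bound). The only difference is organizational: the paper proves feasibility by a single stepwise induction over the contracts being added (maintaining the invariant that types $j<i$ prefer their own while types $j\ge i$ prefer contract $i$), whereas you fix a type and run two telescoping chains; for optimality the paper argues the contrapositive backward from some $p_i'>p_i$ down to $p_1'>p_1=b_1$, while you induct forward---these are the same argument.
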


\begin{proof}
\begin{enumerate}
\item Each buyer of type $i$ picks $(x_i,p_i)$.\\
Induction hypothesis: At each step, when we pick contract $(x_i,p_i)$ $\forall i=0...K$, each buyer type-$j$ with $j<i$ prefers contract $(x_j,p_j)$ and each buyer type-$j$ with $j\geq i$ prefers contract $(x_i,p_i)$. 
\begin{enumerate}
\item Base Case: When picking $(x_0,p_0)=(0,0)$, it is clear that each buyer type is greater than $0$ and each buyer prefers the only contract that is the same as (as good as) not buying.
\item Assume the induction hypothesis is true when picking $(x_i,p_i)$, we will show that the hypothesis is also true for $(x_{i+1},p_{i+1})$.
Assume the hypothesis is true for step $i$ means we have determined the contracts $((x_1,p_1),...,(x_i,p_i))$ and a type-$j$ buyer ($j\leq i$) prefers $(x_j,p_j)$ over other contracts, while a type-$j$ buyer ($j>i$) prefers the $i$th contract over all contracts.
\begin{eqnarray*}
p_{i+1}=(x_{i+1},P_{i+1}(x_{i},p_{i}, x_{i+1}))
\end{eqnarray*}
 By Theorem. \ref{thm:x>x'} and $x_{i+1}>x_{i}$, $\forall j\leq i$
\begin{eqnarray*}
p_{i+1}=P_{i+1}(x_{i},p_{i}, x_{i+1})\geq P_{j}(x_{i},p_{i}, x_{i+1})
\end{eqnarray*}
The contract $(x_{i+1},p_{i+1})$ is above the equal cost line of the contract $(x_i,p_i)$ for buyer type less than or equal to $i$. Which means they prefer the $i$th contract over the $i+1$th contract. But from step $i$, they prefer their own contract over existing contracts. Thus, buyer $j$ $(j\leq i)$ prefers $(x_j,p_j)$ over all contracts.
By Theorem. \ref{thm:x<x'} and $x_{i+1}>x_{i}$, $\forall j\geq i+1$
\begin{eqnarray*}
p_{i+1}= P_{i+1}(x_i,p_i,x_{i+1})\leq P_j(x_i,p_i,x_{i+1})
\end{eqnarray*}
Thus, the contract $(x_{i+1},p_{i+1})$ is below the equal cost line of the contract $(x_i,p_i)$ for buyer type $j>i$. Which means they prefer $(x_{i+1},p_{i+1})$ over $(x_i,p_i)$. But from step $i$, they prefer the $(x_i,p_i)$ contract over all existing contracts. This shows that the hypothesis is true for step $i+1$.
\item By Mathematical Induction, the hypothesis is true for all $i\leq K$.
\end{enumerate}
\item This process results in the highest profit.\\
Since the $x_i's$ are fixed, the only way one could increase the buyer's profit is to increase one of the $p_i$'s. We will show that this is not possible. Assume there exists some contract with the contract set $(x_1,p_1')...(x_K,p_K')$ with some $p_i'>p_i$, by the increasing property of $P_i$ (Lemma \ref{Lemma:strict_increasing}) we need $p_{i-1}'>p_{i-1}$ to insure that type-$i$ buyer picks $(x_i,p_i')$. By induction, we can show that it must be that $(p_1'>p_1)$. Since $p_1=P_1(0,0,x_1)=b_1$, $(x_1,p_1)$ is already on the boundary of acceptance region of the type-$1$ buyer. Thus, any contract with some $p_i'>p_i$ is not a contract where each buyer accepts its own designated contract.
\end{enumerate}
\end{proof}


Figure \ref{fig:nested} shows an example of applying this theorem with three buyer types: given $x_1=2$, $x_2=4$, $x_3=6$, $p_i$ is sequentially determined on the equal-cost line of the previous contract.  With Lemma \ref{lem:set_x_2}, the equal cost line can be restricted to the form $P_{i}(x_{i-1},p_{i-1},x_i)=b_i-\frac{x_{i-1}}{x_i}(b_i-p_{i-1})$.  
The expected profit of the seller can now be expressed as: 
\begin{align*}
&E[R(\mathbb{C})] 
 =\underset{x_1,..,x_K}{max} r_1x_1(b_1-c)\\
&~~~~~~~~~~~~~~~~~+...+r_ix_i(p_i-c)+...+r_Kx_K(p_K-c)\\
&=\underset{x_1\leq ... \leq x_K}{max} r_1x_1(b_1-c)+...+r_ix_i(P_{i}(x_{i-1},p_{i-1},x_i)-c)\\
&~~~~~~~~~~~~~~~~~+...+r_Kx_K(P_{K}(x_{K-1},p_{K-1},x_K)-c)
\end{align*}
By plugging in the values of $p_i=P_{i}(x_{i-1},p_{i-1},x_i)=b_i-\frac{x_{i-1}}{x_i}(b_i-p_{i-1})$ recursively. Each term in the optimization problem can be simplified to 
\begin{eqnarray*}
r_ix_i(p_i-c)=r_i(x_i(b_i-c)-\sum_{j=1}^{i-1} x_j(b_{j+1}-b_j))
\end{eqnarray*}
By simplifying and separate the terms with respect to $x_i$, the expected profit of the seller can be expressed as,
\begin{eqnarray*}
E[R(\mathbb{C})]=\underset{x_1\leq ... \leq x_K}{max} \sum_{i=1}^K x_i \left(r_i(b_i-c)-(b_{i+1}-b_i)\sum_{j=i+1}^Kr_j \right)
\end{eqnarray*}
Firstly, we observe that the above expression is linear in every $x_i$. Thus differentiating with respect to $x_i$ we get a constant: 
\begin{eqnarray*}
\frac{\partial E[R(\mathbb{C})]}{\partial x_i}=r_i(b_i-c)-\sum_{j=i+1}^K r_j(b_{i+1}-b_i)
\end{eqnarray*}
Secondly, because the term $\frac{\partial E[R(\mathbb{C})]}{\partial x_i}$ does not depend on any $x_j$, the optimizer can be easily determined. 
When $\frac{\partial E[R(\mathbb{C})]}{\partial x_i}>0$ we want to make $x_i$ as large as possible ($\leq x_i^*$); when $\frac{\partial E[R(\mathbb{C})]}{\partial x_i}<0$ we want to make $x_i$ as small as possible. 
This leads us to the following algorithm which finds the optimal set of $(x_1,...,x_K)$. The variable $LD$ (Last Determined) below is used to keep track of the last type for which we have already determined its value.

\begin{description}
\item[Step 1]. 
Since $\frac{\partial E[R(\mathbb{C})]}{\partial x_K}=r_K(b_K-c)>0$, 
let $x_K=x_K^*$.\\
Set $LD= K$, $i=K-1$. Go to Step 2.
\item[Step 2]. 
$\pi := (b_{i}-c)\sum_{j=i}^{LD-1}r_j-(b_{i+1}-b_i)\sum_{j=LD}^K r_j$.\\
If $i$ equals to $1$ go to Step 5; 
else if $\pi>0$, go to Step 3; 
else if $\pi\leq 0$, go to Step 4.
\item[Step 3]. 
Set $x_{j}=x_{i}^* (\forall i \leq j < LD)$. Let $LD=i$, $i := i-1$. Go to Step 2.
\item[Step 4]. 
Let $i := i-1$. Go to Step 2.
\item[Step 5]. If $\pi>0$, let $x_{j}=x_{1}^*$ $(\forall~1 \leq j < LD)$, else let $x_{j}=0 (\forall~1 \leq j < LD)$. Terminate.
\end{description}

This algorithm works as follows: We start from determining the value of $x_K$, then we determine $x_{K-1}$ and so on all the way to $x_1$. At step $i$ we take the derivative with respect to $x_i$. If it is better to maximize it, we assign it to be $x_i^*$. If it is better to minimize it, we push the value to $x_{i-1}$ (which we have not determined). However, we have to add the probability of occurrence $r_i$ to the value $(x_{i-1})$ we pushed to so that it reflects the weight of occurrence when determining the value $x_{i-1}$. Once we determined the value for some $x_i$, every $x_j$ previously pushed to it will be assigned the same value.

Together with Theorem \ref{thm:optimal} the above algorithm produces a set of $(x_i, p_i)$'s that's optimal under the monotonicity condition.  This algorithm takes exactly $K$ steps to find the optimal contract set. While calculating the $\sum r_i$ might also take $K$ steps, with careful calculation the method can still be completed in $O(K)$ time.  By comparison, an exhaustive search method will take $O((XP)^K)$ time to find the optimal contract even if we discretize the search space of $x$ and $p$ with $X$ and $P$ possible values. When $x$ and $p$ are continuos, an exhaustive search might not even be possible.

\section{Discussion}\label{sec:discussion} 

\paragraph{A seller with limited resource} 


Our analysis so far has been based on the assumption that the seller has sufficient bandwidth to fulfill all accepted contracts.  We now discuss what happens when the seller's resources are limited.  In the full information case when the seller knows the type of each of a group of potential buyers, it will extract the most by offering $(b_i, x^*_i)$ to a buyer of type $i$.  Under a resource constraint, it will select a set of contracts $S$ to maximize $\sum_{i\in S} b_i x^*_i$ s.t. $\sum_{i\in S} b_i$ does not exceed the limit.  Because the seller can offer any $0<x<x^*_i$ (when $p$ is set to $b_i$), this becomes a form of the continuous (fractional) knapsack problem.
%
%
%
\rev{This problem is solvable by a greedy algorithm \cite{cormen2001introduction}. }
When the seller does not know the buyers' types, the resulting problem is a similar constrained optimization using type distribution information. 

\paragraph{Comparing to auction} 

Auction has been used extensively for the allocation of spectrum on the traditional, wholesale market, and has been proposed for the secondary market as well, see e.g., \cite{wu2008multi,gandhi2007general,huang2006auction}.  Auction is a  mechanism aimed at extracting profit from the sale of rare goods for which potential buyers'  valuation is unknown and can be very hard to obtain.  The contract mechanism studied in this paper may be view as a form of sale by {\em posted price}.  Compared to auction, posted price is more often used in the sale of multiple (and potentially large quantity of) similar goods, the valuation of which is obtained through market research \cite{wang2008auction}.  Since the cost spent on market research can be amortized over multiple goods, posted price sale can be more efficient than auction which incurs cost in conducting each single auction \cite{zeithammer2006auctioning}.  It has been shown that under ideal conditions the two are equivalent in profit generation \cite{myerson1981optimal}.  As more and more license holders may be interested in the secondary market, we believe pricing schemes like the contracts studied in this paper offers a valid alternative to spectrum auction.

\paragraph{Learning buyer types and continuous type distribution} 

We have assumed in our analysis that the seller knows a priori the buyer distribution and that this distribution is discrete. This knowledge can be obtained through online learning, 
%
where a stream of buyers arrive and the seller offer contracts designed not only to make profit (exploit) but also to learn the buyer type distribution (explore) by observing whether the contract is accepted or rejected. 
%
%
%
This can be cast as a multi-armed bandit problem with an independent reward process (assuming buyers are independently drawn from a distribution), and potentially a continuum of arms (each contract is an arm under this model) if the buyer types follow a continuous distribution.  Algorithms exist in the literature that produce sublinear regret (defined as the profit difference between the best single contract and the algorithm) in time \cite{auer2007improved}, and logarithmic regret in time when the number of arms is finite \cite{auer2002finite}. 

\section{Numerical Evaluation}\label{sec:simulation}


In this section 
we compare the performance of contracts generated by the following methods: 

\begin{enumerate}
\item The optimal set of $M$ contracts (denoted OPT(M) in the figures) : 
Finding this set is done by an exhaustive search over a set of discretized values $x$ and $p$ as an approximation of the uncountable choices (the step size for $x$ is $0.5$ and the step size for $p$ is $0.1$).   
As discussed earlier in Section \ref{section:alg}, the complexity increases exponentially in $M$. This restricts us to run at most $M=2$ in our evaluation.

\item The algorithm we introduced in the previous section (denoted ALG in the figures): 
As previously shown, ALG is optimal when the monotonicity condition holds. Since the complexity of this algorithm increases only linearly in $M$, $M$ can be on the order of thousands in our numerical evaluation.

\item A $K$-choose-1 method (denoted MAX in the figures): This is the method that selects the contract with the highest expected profit over the set $\{max_1, \max_2, \cdots, \max_K\}$:  
$\underset{max_i, i=1...K}{\mbox{maximize}} ~ E[U(max_i)]$. 
This is done by checking all $(b_i, x_i^*)$ pairs; the complexity increases linearly in $M$. 
\end{enumerate}

\rev{The experiments are run by increasing $K=1...7$.  For each $K$ value the parameters $(q_i, b_i,\epsilon_i,r_i)$ are independently and randomly generated from uniform distributions ($b_i\in [0,1]$, $q_i\in [0,10]$, $\epsilon_i\in [0,2]$ and $r_i\in [0,1]$ but normalized such that $\sum{r_i}=1$)} For each $K$ we record the average (in expected profit) over 12000 randomly generated cases; these are plotted in Figure \ref{fig:nonincreasing}.  We repeat the same but only for cases that satisfy the monotonicity condition; results are shown in Figure \ref{fig:increasing}. 
\begin{figure}[h!]
  \centering
    \includegraphics[width=0.4\textwidth]{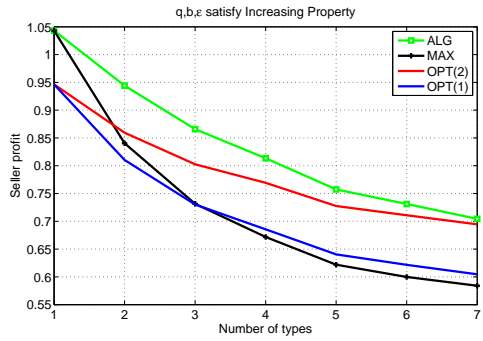}	
  \caption{Simulation results of the sellers profit versus different contracts in the general case}
\label{fig:nonincreasing}
\end{figure}

\begin{figure}[h!]
\vspace{-20pt}
  \centering
    \includegraphics[width=0.4\textwidth,height=0.25\textwidth]{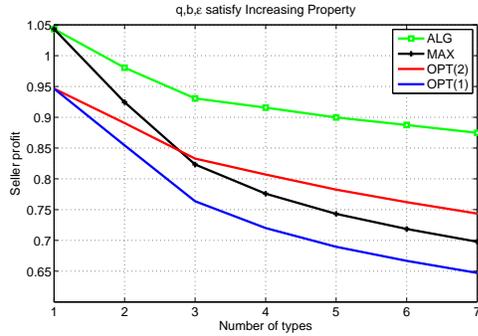}
  \caption{Simulation results of the sellers profit versus different contracts when increasing property holds}
\label{fig:increasing}
\end{figure}

Our observations are as follows. Being able to use more contracts is always better as expected (i.e., OPT(1) $\leq$ OPT(2) in all cases).  When the monotonicity condition holds, ALG is optimal and thus outperforms all other algorithms. 
When $K=1, 2$ OPT(2) should have been optimal but it falls below ALG due to the discretization error. 
When $K>2$, ALG further has the advantage of being able to use more than $2$ contracts. Recall that MAX is the optimal contract when the seller knows exactly the type; thus, MAX is optimal when $K=1$ and outperforms exhaustive search because it does not suffer from discretization error. In the general case when the monotonicity does not necessarily hold, although ALG is not always optimal it still outperforms both OPT(1) and OPT(2). Finally, when the buyer type is harder to predict (as $K$ increases), the maximum expected profit decreases.

\section{Conclusion} 

{We considered a contract design problem where a primary license holder wishes to profit from its excess spectrum capacity by selling it to potential secondary users/buyers via designing a set of profitable contracts.  We completely characterize the optimal solution in the cases where there is a single buyer type, and when multiple types of buyers share a common, known channel condition.  In the case when each type of buyers have different channel conditions we construct an algorithm that generates a set of contracts in a computationally efficient manner, and show that this set is optimal when the buyer types satisfy a monotonicity condition.} 

\bibliographystyle{unsrt}
\bibliography{writeup}

\begin{thebibliography}{10}

\bibitem{akyildiz2006next}
I.F. Akyildiz, W.Y. Lee, M.C. Vuran, and S.~Mohanty.
\newblock Next generation/dynamic spectrum access/cognitive radio wireless
  networks: a survey.
\newblock {\em Computer Networks}, 50(13):2127--2159, 2006.

\bibitem{buddhikot2007understanding}
M.M. Buddhikot.
\newblock Understanding dynamic spectrum access: Models, taxonomy and
  challenges.
\newblock In {\em New Frontiers in Dynamic Spectrum Access Networks, 2007.
  DySPAN 2007}, pages 649--663. IEEE, 2007.

\bibitem{wu2008multi}
Y.~Wu, B.~Wang, K.~Liu, and T.C. Clancy.
\newblock A multi-winner cognitive spectrum auction framework with
  collusion-resistant mechanisms.
\newblock In {\em New Frontiers in Dynamic Spectrum Access Networks, 2008.
  DySPAN 2008. 3rd IEEE Symposium on}, pages 1--9. IEEE, 2008.

\bibitem{gandhi2007general}
S.~Gandhi, C.~Buragohain, L.~Cao, H.~Zheng, and S.~Suri.
\newblock A general framework for wireless spectrum auctions.
\newblock In {\em New Frontiers in Dynamic Spectrum Access Networks, 2007.
  DySPAN 2007. 2nd IEEE International Symposium on}, pages 22--33. IEEE, 2007.

\bibitem{huang2006auction}
J.~Huang, R.A. Berry, and M.L. Honig.
\newblock Auction-based spectrum sharing.
\newblock {\em Mobile Networks and Applications}, 11(3):405--418, 2006.

\bibitem{grimm2003low}
V.~Grimm, F.~Riedel, and E.~Wolfstetter.
\newblock Low price equilibrium in multi-unit auctions: the gsm spectrum
  auction in germany.
\newblock {\em International journal of industrial organization},
  21(10):1557--1569, 2003.

\bibitem{kim08}
H.~Kim and K.~G. Shin.
\newblock Efficient discovery of spectrum opportunities with mac-layer sensing
  in cognitive radio networks.
\newblock {\em IEEE Transactions on Mobile Computing}, 7(5):533--545, May 2008.

\bibitem{xin-liu06}
X.~Liu and S.~Shankar N.
\newblock Sensing-based opportunistic channel access.
\newblock {\em Journal of Mobile Networks and Applications}, August 2006.

\bibitem{zhao07}
Q.~Zhao, L.~Tong, A.~Swami, and Y.~Chen.
\newblock Decentralized cognitive mac for opportunistic spectrum access in ad
  hoc networks: A pomdp framework.
\newblock {\em IEEE Journal on Selected Areas in Communications (JSAC)},
  5(3):589--600, April 2007.

\bibitem{ahmad-tit08}
S.~H.~A. Ahmad, M.~Liu, T.~Javidi, Q.~Zhao, and B.~Krishnamachari.
\newblock Optimality of myopic sensing in multi-channel opportunistic access.
\newblock {\em IEEE Transactions on Information Theory}, September 2009.

\bibitem{duan2011contract}
L.~Duan, L.~Gao, and J.~Huang.
\newblock Contract-based cooperative spectrum sharing.
\newblock In {\em Dynamic Spectrum Access Networks (DySPAN)}, 2011.

\bibitem{muthuswamyportfolio}
P.K. Muthuswamy, K.~Kar, A.~Gupta, S.~Sarkar, and G.~Kasbekar.
\newblock Portfolio optimization in secondary spectrum markets.
\newblock In {\em WiOpt}, 2011.

\bibitem{fullversion}
\rev{\url{http://www-personal.umich.edu/~shangpin/Full_Version.pdf}}.

\bibitem{cormen2001introduction}
T.H. Cormen.
\newblock {\em Introduction to algorithms}.
\newblock The MIT press, 2001.

\bibitem{wang2008auction}
X.~Wang, A.~Montgomery, and K.~Srinivasan.
\newblock When auction meets fixed price: a theoretical and empirical
  examination of buy-it-now auctions.
\newblock {\em Quantitative Marketing and Economics}, 6(4):339--370, 2008.

\bibitem{zeithammer2006auctioning}
R.~Zeithammer and P.~Liu.
\newblock When is auctioning preferred to posting a fixed selling price?
\newblock {\em University of Chicago}, 2006.

\bibitem{myerson1981optimal}
R.B. Myerson.
\newblock Optimal auction design.
\newblock {\em Mathematics of operations research}, pages 58--73, 1981.

\bibitem{auer2007improved}
P.~Auer, R.~Ortner, and C.~Szepesv{\'a}ri.
\newblock Improved rates for the stochastic continuum-armed bandit problem.
\newblock {\em Learning Theory}, 2007.

\bibitem{auer2002finite}
P.~Auer, N.~Cesa-Bianchi, and P.~Fischer.
\newblock Finite-time analysis of the multiarmed bandit problem.
\newblock {\em Machine learning}, 47(2):235--256, 2002.

\end{thebibliography}

\appendix
Theorem. \ref{thm:x>x'}
\begin{proof}
~\\ \vspace{-15pt}
\begin{description}
\item[Case 1] $x' \leq x_i^*\leq x_j^*$\\
When $x' \leq x_i^*$ and $x' \leq x_j^*$, the equal cost lines for $x<x'$ are of the form,
\begin{eqnarray*}
P_i(x',p',x)=b_i-\frac{q_i-\epsilon_i-\delta_i}{x}\\
P_j(x',p',x)=b_j-\frac{q_j-\epsilon_j-\delta_j}{x}
\end{eqnarray*}
where we let $\delta_i=C_i(x',p')$ and $\delta_j=C_j(x',p')$. Take the derivatives with respect to $x$.
\begin{eqnarray*}
\frac{\partial P_i(x',p',x)}{\partial x}&=&(q_i-\epsilon_i-\delta_i)x^{-2}\\
\frac{\partial P_j(x',p',x)}{\partial x}&=&(q_j-\epsilon_j-\delta_j)x^{-2}
\end{eqnarray*}
By definition, $P_i(x',p',x')=p'=P_j(x',p',x')$,
\begin{eqnarray*}
p'=b_i-\frac{q_i-\epsilon_i-\delta_i}{x'}=b_j-\frac{q_j-\epsilon_j-\delta_j}{x'}
\end{eqnarray*}
Considering $b_i<b_j$, we know that $q_j-\epsilon_j-\delta_j> q_i-\epsilon_i-\delta_i$. Which implies $\frac{\partial P_j(x',p',x)}{\partial x}\geq \frac{\partial P_i(x',p',x)}{\partial x}$, and thus $P_i(x',p',x)\geq P_j(x',p',x)$, $\forall x<x'$.

\item [Case 2] $x_i^*\leq x' \leq x_j^*$\\
The equal cost lines are,
\begin{eqnarray*}
P_i(x',p',x)&=&\left\{
\begin{tabular}{ll}
$\frac{x'p'}{x}$ & $x_i^*\leq x\leq x'$\\
$b_i-\frac{q_i-\epsilon_i-\delta_i}{x}$ & $x\leq x_i^*$
\end{tabular}
\right.\\
P_j(x',p',x)&=&
\begin{tabular}{ll}
$b_j-\frac{q_j-\epsilon_j-\delta_j}{x}$ & $~~x\leq x'$
\end{tabular}
\end{eqnarray*}
Where $\delta_i=C_i(x',p')$ and $\delta_j=C_j(x',p')$. Taking the derivatives,
\begin{eqnarray*}
P'_i(x',p',x)&=&\left\{
\begin{tabular}{ll}
$-x'p'x^{-2}<0$ & $x^{i*}\leq x\leq x'$
\\
$(q_i-\epsilon_i-\delta_i)x^{-2}>0$ & $x\leq x^{i*}$
\end{tabular}
\right.\\
P'_j(x',p',x)&=&
\begin{tabular}{ll}
$(q_j-\epsilon_j-\delta_j)x^{-2}>0$ & $~~x\leq x'$
\end{tabular}
\end{eqnarray*}
This implies $P_i(x',p',x)>P_j(x',p',x)$, $\forall x$ $x_i^*\leq x\leq x'$.
\begin{eqnarray*}
P_i(x',p',x_i^*)&=&b_i-\frac{q_i-\epsilon_i-\delta_i}{x_i^*}\\
&>&P_j(x',p',x_i^*)=b_j-\frac{q_j-\epsilon_j-\delta_j}{x_i^*}
\end{eqnarray*}
Since $b_i<b_j$, we conclude that $q_j-\epsilon_j-\delta_j\geq q_i-\epsilon_i-\delta_i$. Which indicates that $\frac{\partial P_j(x',p',x)}{\partial x}\geq \frac{\partial p_i(x',p',x)}{\partial x}$ and $P_i(x',p',x)\geq P_j(x',p',x)$, $\forall x\leq x_i^*$.

\item [Case 3] $x'\geq x_j^*\geq x_i^*$\\
When $x\geq x_j^*\geq x_i^*$, the equal cost line of both types follow $x'p'=xp$. Thus, $P_i(x',p',x_j^*)=P_j(x',p',x_j^*)$. Then the case falls into Case 2 and $P_i(x_j^*,P_j(x',p',x_j^*),x)\geq P_j(x_j^*,P_j(x',p',x_j^*),x)$, $\forall x<x_j^*$.
\end{description}
\end{proof}

Theorem. \ref{thm:x<x'}
\begin{proof}
\begin{description}
\item [Case 1] $x' \leq x_i^* \leq x_j^*$\\
When $x' \leq x_i^* \leq x_j^*$, both types have equal utiliy line of the same form.
\begin{eqnarray}
P_i(x',p',x)=b_i-\frac{q_i-\epsilon_i-C_i(x',p')}{x}\nonumber \\
P_i(x',p',x)=b_j-\frac{q_j-\epsilon_j-C_j(x',p')}{x}\label{eqn:p_i}
\end{eqnarray}
By exactly the same argument as in Theorem. \ref{thm:x>x'} we can find out that. $\frac{\partial P_j(x',p',x)}{\partial x}\geq \frac{\partial p_i(x',p',x)}{\partial x}$, and thus, 
\begin{eqnarray*}
P_i(x',p',x)\leq P_j(x',p',x)~\forall x_i^*\geq x\geq x'
\end{eqnarray*}
When $x_i^*<x<x_j^*$, while $P_j(x',p',x)$ still follows the same formula (Equation. \ref{eqn:p_i}), $P_i(x',p',x)$ starts to decrease by following the line $P_i(x',p',x)=x'P_i(x',p',x_i^*)/x$. Thus, 
\begin{eqnarray*}
P_i(x',p',x)\leq P_j(x',p',x)~\forall x_i^*\leq x\leq x_j^*
\end{eqnarray*}
When $x>x_j^*$, both $i,j$ follow the form $P(x',p',x)=P(x',p',x_j^*)/x$. But $P_i(x',p',x_j^*)\leq P_j(x',p',x_j^*)$, they never cross and $P_j(x',p',x)\geq P_i(x',p',x)$ $\forall x>x_j^*$.
\item [Case 2] $x_i^*<x'<x_j^*$\\
When $x_i^*<x'<x<x_j^*$ they are of the form,
\begin{eqnarray*}
P_i(x',p',x)&=&\frac{x'p'}{x}\\
P_j(x',p',x)&=&b_j-\frac{q_j-\epsilon_j-C_j(x',p')}{x}
\end{eqnarray*}
respectively. By the same argument as in Theorem. \ref{thm:x>x'}, $P_i$ is decreasing while $P_j$ is increasing. Thus, $P_i(x',p',x_j^*)\leq P_j(x',p',x_j^*)$.
When $x>x_j^*$,
\begin{eqnarray*}
P_i(x',p',x)=\frac{x_j^*P_i(x',p',x_j^*)}{x}\\
P_j(x',p',x)=\frac{x_j^*P_j(x',p',x_j^*)}{x}
\end{eqnarray*}
Since $P_i(x',p',x_j^*)<P_j(x',p',x_j^*)$ we konw that $P_i(x',p',x)<P_j(x',p',x)$ $\forall x>x_j^*$.
\item [Case 3] $x'>x_j^*>x_i^*$\\
When $x>x_j^*$, both types have equal cost line as $xp=x'p'$. Thus, $P_i(x',p',x)=P_j(x',p',x)$ $\forall x>x_j^*$.
\end{description}
\end{proof}

\end{document}